\newcommand{\E}{\mathbb{E}}
\newcommand{\p}{\mathbb{P}}
\newcommand{\R}{\mathbb{R}}
\newcommand{\X}{\mathcal X}
\renewcommand{\mid}{|}
\newcommand{\id}{\mathds{1}}
\def\d{\mathrm{d}}
\def\laweq{\buildrel \d \over =}
\newcommand{\VaR}{\mathrm{VaR}}
\newcommand{\VaRl}{\mathrm{Q}}
\newcommand{\ES}{\mathrm{ES}}
\numberwithin{equation}{section}
\numberwithin{figure}{section}
\numberwithin{table}{section}
\newtheorem{theorem}{Theorem}[section]
\newtheorem{lemma}{Lemma}[section]
\newtheorem{proposition}{Proposition}[section]
\theoremstyle{definition}
\newtheorem{definition}{Definition}[section]
\newtheorem{example}{Example}[section]
\theoremstyle{remark}
\newtheorem{remark}{Remark}[section]
\numberwithin{equation}{section}
\begin{document}

%
%

\begin{center}
{\Large\bf  Weak comonotonicity}

\vspace*{7mm}

{\large Ruodu Wang}

\smallskip

\textit{Department of Statistics and Actuarial Science, University of Waterloo, Waterloo, Ontario, N2L 5A7, Canada.}
\texttt{E-mail:~wang@uwaterloo.ca}

\bigskip

{\large Ri\v cardas Zitikis}

\smallskip

\textit{School of Mathematical and Statistical Sciences, University of
Western Ontario, London, Ontario N6A 5B7, Canada.}
\texttt{E-mail:~rzitikis@uwo.ca}

~

This version: July 2019

\end{center}

\medskip

\begin{quote}
\textbf{Abstract.}
The classical notion of comonotonicity has played a pivotal role when solving diverse problems in economics, finance, and insurance. In various practical problems, however, this notion of extreme positive dependence structure is overly restrictive and sometimes unrealistic. In the present paper, we put forward a notion of weak comonotonicity, which contains the classical notion of comonotonicity as a special case, and gives rise to necessary and sufficient conditions for a number of optimization problems, such as those arising in portfolio diversification, risk aggregation, and premium calculation.
In particular,
we show that a combination of weak comonotonicity and weak antimonotonicity with respect to some choices of measures is sufficient for the maximization of Value-at-Risk aggregation, and weak comonotonicity is necessary and sufficient for the Expected Shortfall aggregation.
Finally, with the help of weak comonotonicity acting as an intermediate notion of dependence between the extreme cases of no dependence and strong comonotonicity, we give a natural solution to a risk-sharing problem.

\smallskip

\textit{Key words and phrases}: finance; comonotonicity; risk aggregation; conditional beta.
\end{quote}

\section{Introduction}
\label{sect-1}

Two functions are  said to be comonotonic if the ups and downs of one function follows those of the other function. Hence, though geometric in nature, comonotonicity is also a kind of dependence notion between functions. It is not surprising, therefore, that comonotonicity has given rise to sufficient conditions when solving a variety of problems in economics, banking, and insurance, and in particular those that deal with portfolio diversification, risk aggregation, and premium calculation principles. Our search for necessary and sufficient conditions has revealed that a certain augmentation of the classical (and inherently point-wise) notion of comonotonicity with appropriately constructed measures achieves more advanced goals than those associated with sufficient conditions. As a by-product, the augmented notion of comonotonicity, which we call weak comonotonicity, provides a natural bridge between a host of concepts in the aforementioned areas of application, and also in statistics, including measures of association. In what follows, we methodically  develop the notion of weak comonotonicity from first principles, establish its various properties, and demonstrate manifold uses.

Rigorously speaking, two functions $g$ and $h$ are comonotonic whenever the property
\begin{equation}\label{comon-0}
\big (g(x)-g(x')\big ) \big (h(x)-h(x')\big )\ge 0
\end{equation}
holds for all $x,x'\in \mathbb{R}$. This notion of comonotonicity \citep{S86} has played a pivotal role in sorting out numerous applications and developing new theories  \citep[e.g.,][]{Y87,D94}. Since then, these advances have been in the  mainstream of quantitative finance and economics literature \citep[e.g.,][]{DDGV02a, DDGV02b,FS16}. In this paper, we shall focus on dependence concepts between uni-dimensional functions (and random variables); for multivariate extensions and further references on comonotonicity, we refer to \cite{PS10}, \cite{CDG12}, \cite{EGH12}, and \cite{R13}. Note that if non-negativity in property~(\ref{comon-0}) is replaced by non-positivity, the functions $g$ and $h$ are said to be antimonotonic.

Comonotonicity of (Borel) functions $g$ and $h$ is a sufficient condition for non-negativity of the covariance $\mathrm{Cov}[g(X),h(X)]$, where $X$ is a random variable such that $g(X)$ and $h(X)$ have finite second moments.
This is immediately seen from the equations
\begin{align}
2\,\mathrm{Cov}[g(X),h(X)]
&=\mathbb{E}\big [ (g(X)-g(X'))(h(X)-h(X')) \big ]
\notag
\\
&=\iint_{\mathbb{R}^2} \big (g(x)-g(x')\big ) \big (h(x)-h(x')\big )
F_X(\mathrm{d}x)F_X(\mathrm{d}x') ,
\label{comon}
\end{align}
where $X'$ is an independent copy of $X$,
and  $F_X$ denotes the cumulative distribution function (cdf) of $X$.  The problem of determining the sign of covariances such as the one above has been of much interest in economics, insurance, banking, reliability engineering, and statistics. Several offshoots have arisen from this type of research, including quadrant dependence \citep{L66}, measures of association  \citep{EPW67}, monotonic \citep{KS78} and supremum \citep{G41} correlation coefficients.
The following example illustrates the need for such results.

\begin{example} \label{example-11}

Let $X$ be the severity of a risk, which could, for example, be a profit-and-loss variable. Let $g(X)$ be the cost associated with the risk $X$, and let $F_X^h$ be the so-called (knowledge-based) weighted cdf of the original random variable $X$ \citep[e.g.,][and references therein]{R97}. That is, $F_X^h$ is defined by the differential equation
\begin{equation}\label{weight-0}
F_X^h(\mathrm{d}x)={h(x)\over \mathbb{E}[h(X)]}F_X(\mathrm{d}x) ,
\end{equation}
where $h$ is a non-negative function such that $\mathbb{E}[h(X)]\in (0,\infty )$. The role of the function $h$ is to modify the probabilities of the original random variable $X$. For example, in insurance, it is usually designed to lower the left-hand tail of the pdf of $X$ and to lift its right-hand tail, thus making large insurance risks/losses more noticeable and the premiums loaded; we refer to, e.g., \cite{DG85} for the Esscher principle of insurance premium calculation, where $h(x)=e^{t x}$ for some constant $t>0$. Under the weighted cdf $F_X^h$, the average cost is
\[
\mathbb{E}^h[g(X)]=\int g(x)F_X^h(\mathrm{d}x) = {\mathbb{E}[g(X)h(X)]\over \mathbb{E}[h(X)]},
\]
which is not smaller than the average cost $\mathbb{E}[g(X)]$ under the true cdf $F_X$ if and only if the covariance $\mathrm{Cov}[g(X),h(X)]$ is non-negative. Several natural questions arise in this context: Under what conditions on the cost function $g$ and the probability weighting function $h$ is the covariance non-negative? Should the functions really be comonotonic, as our earlier arguments would suggest? It is important to note at this point that practical and theoretical considerations may or may not support the latter assumption, due to the complexity of economic agents' behaviour \citep[e.g.,][]{M52,PS03,GM09}.
\end{example}

We have organized the rest of the paper as follows. In Section~\ref{sect-wc}, we define, illustrate, and discuss the notion of weak comonotonicity, first for Borel functions and then for random variables (i.e., generic measurable functions). In Section~\ref{section-aggregation} we elucidate the role of weak comonotonicity in risk aggregation.
In particular,
we show that a combination of weak comonotonicity and weak antimonotonicity with respect to some sets of measures is sufficient for the maximization of Value-at-Risk (VaR) aggregation, and weak comonotonicity is necessary and sufficient for the Expected Shortfall (ES) aggregation. Both the VaR and the ES aggregation problems have been popular in the recent risk management literature \citep[e.g.,][]{R13, MFE15,EWW15}.
In Section~\ref{sect-indep}, we explore some properties of weak comonotonicity and its relation to other dependence structures and measures of association.  As most of this paper deals with weak comonotonicity with respect to product measures, in Section~\ref{sect-5} we illuminate the special role of these measures within the general context of joint measures. With the help of the developed theory, in Section~\ref{application} we present a detailed solution to a risk-sharing problem by invoking a weak comonotonicity constraint, whose naturalness becomes clear upon noticing that the assumption of arbitrary dependence among admissible allocations might sometimes be too weak, and the assumption of strong comonotonicity might be too strong, and so an intermediate dependence assumption based on weak comonotonicity arises most naturally. 
Section~\ref{sect-6} concludes the paper with a brief overview of main contributions.

\section{Weak comonotonicity}
\label{sect-wc}

Our efforts to tackle problems like those in the previous section, and in particular those related to risk aggregation (Section~\ref{section-aggregation}), have naturally led us to a notion of weak comonotonicity (to be defined in a moment) which naturally bridges the arguments around quantities in~(\ref{comon-0}) and (\ref{comon}) in the following way: First, note the equation
\begin{equation}\label{comon-0alternaive}
\big (g(x)-g(x')\big ) \big (h(x)-h(x')\big )
=\iint_{\mathbb{R}^2} \big (g(z)-g(z')\big ) \big (h(z)-h(z')\big )
\delta_{x}(\mathrm{d}z)\delta_{x'}(\mathrm{d}z') ,
\end{equation}
where $\delta_{x}$ and $\delta_{x'}$ are point masses at the points $x$ and $x'$, respectively. It now becomes obvious that by choosing various product measures instead of $\delta_{x}\times \delta_{x'}$, we can  seamlessly move from classical comonotonicity~(\ref{comon-0}) to covariance non-negativity~(\ref{comon}). Formalizing this flexibility gives rise to a general definition of weak comonotonicity, which is the topic of   Section~\ref{sect-wcbf}.

\subsection{Weak comonotonicity of Borel functions}
\label{sect-wcbf}

In what follows, we use $(\mathbb{R}, \mathcal{B})$ to denote the Borel measurable space, where $\mathcal B:= \mathcal B(\mathbb{R})$ is the Borel $\sigma$-algebra, and we also work with the measurable space $(\mathbb{R}^2, \mathcal{B}^2)$, where $\mathcal{B}^2:=\mathcal{B}\otimes \mathcal{B}$.

\begin{definition}\label{new-1}
Let $\mathcal R $ be any subset of product measures $\varrho_1\times \varrho_2$ on $(\mathbb{R}^2, \mathcal{B}^2)$. We say that two functions $g$ and $h$ are  \emph{weakly comonotonic with respect to $\mathcal R$} whenever
\begin{equation}\label{eq:def1}
\iint_{\mathbb{R}^2} \big (g(x)-g(x')\big ) \big (h(x)-h(x')\big )
\varrho_1(\mathrm{d}x)\varrho_2(\mathrm{d}x')\ge 0
\end{equation}
for every $\varrho_1\times \varrho_2\in \mathcal R$.
In case $\mathcal R$ is a singleton, we also say that $g$ and $h$ are  weakly comonotonic with respect to $\rho_1\times \rho_2$ if~\eqref{eq:def1} holds.
\end{definition}

We also speak of \emph{weak antimonotonicity} if non-negativity in~\eqref{eq:def1} is replaced by non-positivity.
Property~(\ref{eq:def1}) gives rise to a whole spectrum of comonotonicity notions, at one end of which is the classical notion of comonotonicity (i.e., property~(\ref{comon-0})), which can be viewed as $g$ and $h$ being weakly comonotonic with respect to $\mathcal R=\{  \delta_{x}\times \delta_{x'}:x,x' \in \mathbb{R}\}$. In other words, the classical notion of comonotonicity can be thought of as the point-wise or strong comonotonicity. On the other hand, Definition~\ref{new-1} and equation~(\ref{comon}) imply that the covariance $\mathrm{Cov}[g(X),h(X)]$ is non-negative if and only if the functions $g$ and $h$ are weakly comonotonic with respect to $\{F_X\times F_X\}$, where $F_X$ is the cdf of $X$. By choosing various product measures, we thus arrive at a large array of comonotonicity notions. The following example is designed to illustrate, and in particular enhance our intuitive understanding of, the notion of weak comonotonicity.

\begin{example} \label{example-21}
Let $g(x)=\sin(x)$ and $h(x)=\cos(x)$. In the classical sense, the two functions are neither comonotonic nor antimonotonic on the interval $[0,\pi ]$, but they are antimonotonic on $[0,\pi/2 ]$ and comonotonic on $[\pi/2,\pi ]$. As to their weak comonotonicity, consider the integral
\[
\Delta(a):=\iint_{\mathbb{R}^2} \big (g(x)-g(x')\big ) \big (h(x)-h(x')\big )
F(\mathrm{d}x)F(\mathrm{d}x')
\]
with respect to the following three uniform distributions $F=F_{[0,a]}$,  $F_{[(\pi-a)/2,(\pi+a)/2]}$, and $F_{[\pi-a,\pi]}$ on the noted intervals,
where $a\in [0,\pi ]$ in every case. We have
\[
\Delta(a)=
\left\{
  \begin{array}{ll}\displaystyle
    \frac{\sin ^2(a)}{ a}-\frac{2\sin (a) (1-\cos (a))}{a^2}
&\hbox{ when}\quad F=F_{[0,a]},
\\
  0
&\hbox{ when}\quad F=F_{[(\pi-a)/2,(\pi+a)/2]},
\\
    \displaystyle\frac{2\sin (a) (1-\cos (a))}{a^2}-\frac{\sin ^2(a)}{ a}
&\hbox{ when}\quad F=F_{[\pi-a,\pi]}.
  \end{array}
\right.
\]
When $F=F_{[\pi-a,\pi]}$, we depict $\Delta(a)$ as a function of $a\in [0,\pi ]$ in Figure~\ref{c-continous-right}.
\begin{figure}[h!]
\centering
\includegraphics[width=0.5\textwidth]{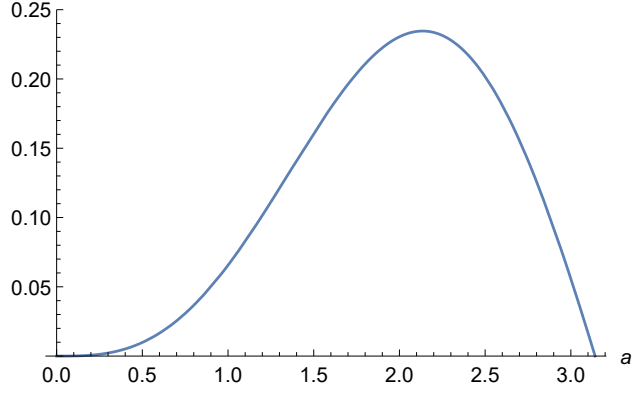}
\caption{Weak comonotonicity of $\sin(x)$ and $\cos(x)$ when $F$ is the uniform on $[\pi-a,\pi]$ distribution, depicted as the function $\Delta(a)$ for all  $a\in [0,\pi ]$.}
\label{c-continous-right}
\end{figure}
It is non-negative for every $a\in [0,\pi ]$, thus implying that the functions $\sin(x)$ and $\cos(x)$, which are neither comonotonic nor antimonotonic on $[0,\pi ]$ in the classical sense, are nevertheless \textit{weakly comonotonic} with respect to $\{F_{[\pi-a,\pi]}\times F_{[\pi-a,\pi]}:a\in [0,\pi ]\}$. On the other hand, when $F=F_{[0,a]}$, the function $\Delta(a)$ is non-positive for every $a\in [0,\pi ]$, and thus $\sin(x)$ and $\cos(x)$ are \textit{weakly antimonotonic} with respect to $\{F_{[0,a]}\times F_{[0,a]}:a\in [0,\pi ]\}$. Finally, under the distribution $F_{[(\pi-a)/2,(\pi+a)/2]}$, the two functions are both weakly comonotonic and weakly antimonotonic. This concludes Example~\ref{example-21}.
\end{example}

It is useful to reflect upon Example~\ref{example-21} from a general perspective, for which we employ Bayesian terminology. Namely, we first impose the (improper) uniform prior $\pi(x)\propto 1$ on the entire real line. Then we weight the prior using the indicator function $\mathbb{I}_{[x_0,x_1]}(x)$, where $[x_0,x_1]$ can be any compact interval. This gives rise to the uniform distribution $F_{[x_0,x_1]}$ defined by the differential equation
\begin{equation}\label{weight-1}
F_{[x_0,x_1]}(\mathrm{d}x)={\mathbb{I}_{[x_0,x_1]}(x)\over \mathbb{E}^{\pi}[\mathbb{I}_{[x_0,x_1]}]}\pi(\mathrm{d}x)
\end{equation}
(compare it with equation~\eqref{weight-0}). This uniform distribution, whose density (pdf) takes the form $f_{[x_0,x_1]}(x)=\mathbb{I}_{[x_0,x_1]}(x)/(x_1-x_0)$, can be thought of as a magnifying glass over the window $[x_0,x_1]$:  by sliding it over the domain of definition of functions, we explore weak comonotonicity of the functions, as we have done in Example~\ref{example-21}.

\subsection{Weak comonotonicity of random variables}
\label{sect-wcrv}

Note that the moment we had shifted our focus from non-decreasing functions to comonotonic ones, we lost the need for having order relationship in the underlying measurable space. Hence, we can work with abstract measurable space $(\Omega,\mathcal F)$, in which case $\mathcal{F}$-measurable functions like $X,Y  : \Omega \to \mathbb{R}$ are called random variables, and this is the general framework within which we work next. Namely, $X$ and $Y$ are said to be comonotonic whenever
\[
(X(\omega)-X(\omega') )(Y(\omega)-Y(\omega'))\ge 0
\]
for all $\omega,\omega'\in \Omega$. The definition is independent of any choice of measure.

\begin{definition}\label{new-10}
Let $\mathcal P $ be any subset of probability product measures $\pi_1\times \pi_2$ on $(\Omega^2 ,\mathcal F^2)$. We say that two random variables $X$ and $Y$ are  \emph{weakly comonotonic with respect to $\mathcal P$} whenever
\begin{equation}\label{eq:def2}
\iint _{\Omega^2}  (X(\omega)-X(\omega') )(Y(\omega)-Y(\omega')) \pi_1(\d \omega)\pi_2( \d \omega') \ge 0
\end{equation}
for every $\pi_1\times \pi_2\in \mathcal P$.
\end{definition}

Again, we also speak of \emph{weak antimonotonicity} if non-negativity in~\eqref{eq:def2} is replaced by non-positivity.
This definition not only generalizes Definition~\ref{new-1} but also paves a path toward the notion of conditional correlation, and thus, in turn, toward conditional beta that has prominently featured in problems such as dynamic asset pricing and risk estimation with non-synchronous prices \citep[see also references therein]{E16}. The next example elucidates the connection.

\begin{example}\rm \label{example-22}
Let $(\Omega ,\mathcal F, \p)$ be a probability space of financial scenarios $\omega \in \Omega $, and let $X,Y:\Omega \to \mathbb{R}$ be, for example, risk severities of two financial instruments. Quite often, it is of interest to measure association between the two instruments over certain events $A\in \mathcal F $ of positive probabilities. In this case, the original probability $\p $ is re-weighted
\[
\p (\d \omega \mid A)={\mathbb{I}_{A}(\omega) \over \p(A)}\p (\d \omega),
\]
thus reducing property~\eqref{eq:def2} via $\pi_1(\d \omega)=\pi_2(\d \omega)=\p (\d \omega \mid A)$ to
\begin{equation}\label{eq:def2-a}
\int _{A} \int _{A}  (X(\omega)-X(\omega') )(Y(\omega)-Y(\omega')) \p(\d \omega)\p( \d \omega') \ge 0 .
\end{equation}
Property~(\ref{eq:def2-a}) can in turn be rewritten as $\mathrm{Corr}[X,Y\mid A]\ge 0$, which can equivalently be interpreted as the non-negativity requirement on the conditional beta \citep{E16} over the events $A\in \mathcal F $ of interest, which could, for example, make up the $\sigma$-field of historical events (see, e.g., \cite{BJRL15} for a time series context; and \cite{PR07}, \cite{FS16} for risk measurement and management contexts).
\end{example}

Coming now back to Definition~\ref{new-10}, we check that the following four statements are equivalent:
\begin{enumerate}[(i)]
\item $X$ and $Y$ are (strongly, or point-wise) comonotonic;
\item $X$ and $Y$ are weakly comonotonic with respect to every probability product measures $\pi_1\times \pi_2$ on $(\Omega^2 ,\mathcal F^2)$;
\item $X$ and $Y$ are weakly  comonotonic with respect to
$\mathcal P=\{\delta_{\omega}\times \delta_{\omega'}:\omega,\omega'\in \Omega\}$;
\item there exist non-decreasing functions $f_1$ and $f_2$ and a random variable $Z$ such that $X=f_1(Z)$ and $Y=f_2(Z)$;
    according to Denneberg's Lemma \citep[Proposition 4.5]{D94}, we can set $Z:=X+Y$.
\end{enumerate}

We are now ready to elucidate the fundamental role of weak comonotonicity in problems associated with risk aggregation.

\section{Risk aggregation and weak comonotonicity}
\label{section-aggregation}

Two of the most popular classes of  risk measures used in banking and insurance practice are the Value-at-Risk (VaR)  and the Expected Shortfall (ES, also known as TVaR, CTE, CVaR, AVaR). We fix an atomless probability space $(\Omega,\mathcal F, \p)$.
For a random variable $X$, the VaR at  level $p\in(0,1)$  is defined as
$$
 \VaR_p(X)= \inf\{x\in \R: \p(X\le x)>  p\},
$$
and the ES at  level $p\in(0,1)$ is defined as
$$
  \ES_p(X)=\frac{1}{1-p}\int_p^1 \VaR_q(X)\d q.
$$

A classic problem in the field of risk management is risk aggregation with given marginal distributions \citep[e.g.,][Section 8.4]{MFE15}.   Let $X$ and $Y$ be two integrable random variables.
For $p\in (0,1)$, we say that $(X,Y)$ maximizes the $\VaR_p$ aggregation, if
$$\VaR_p(X+Y) =\max\{ \VaR_p(X'+Y'): X'\laweq X, ~Y'\laweq Y\},$$
and similarly for the ES aggregation, where ``$\laweq $'' stands for equality in distribution.

It is well-known  \citep[e.g.,][Section 8.4.4]{MFE15} that the maximization of ES aggregation is achieved by (strong) comonotonicity, that is, $(X,Y)$ maximizes the $\ES_p$ aggregation if they are strongly comonotonic.
A similar statement holds for all convex-order consistent risk measures, or variability measures, such as the variance, the standard deviation, convex and coherent risk measures, and the Gini Shortfall \citep{FWZ17}, and this is because of the well-known fact \citep[e.g.,][]{PW15} that comonotonicity maximizes convex order of the sum.
Note that for a specific $p\in (0,1)$, (strong) comonotonicity is a sufficient condition for $(X,Y)$ to maximize the $\ES_p$ aggregation,  but it is not necessary.

Another well-known phenomenon  \citep[e.g.,][Proposition 8.31]{MFE15}, which is in sharp contrast to the above situation, is that the maximization of VaR aggregation is not achieved by comonotonicity.
This is due to the fact that $\VaR_p$ is generally not subadditive.
The calculation of the worst-case VaR aggregation is technically very challenging and the corresponding dependence structure is quite complicated. For recent analytical and numerical results, we refer to \cite{WPY13} and \cite{EPR13, EPRWB14, EWW15}.
Fortunately, the case of $n=2$ admits an analytical solution, which is originally due to \cite{M81} and \cite{R82}.

To summarize, strong comonotonicity is sufficient but not necessary for the maximization of ES aggregation, and it is neither sufficient nor necessary for the maximization of VaR aggregation. This calls for weaker and alternative dependence notions compared to strong comonotonicity.
We shall see later in Theorem~\ref{th:var} that the notion of weak comonotonicity serves this purpose very well, as it gives a sufficient condition for the maximum $\VaR_p$ aggregation, as well as a necessary and sufficient condition for the maximum $\ES_p$ aggregation.

To prepare for Theorem~\ref{th:var}, we need some notation and a lemma. For a random variable $X$ and for any $p\in (0,1)$, we write
\[
A_p^X=\{\omega\in \Omega: X(\omega)>\VaR_p(X)\}.
\]
Note that $\p(A^X_p)= 1-p$ if $X$ is continuously distributed. In this case,
$A^X_p$ is the event of probability $p$ on which $X$ takes its largest possible values.
Further, let
\[
\mathcal P^X_p= \{\delta_\omega\times \delta_{\omega'}: \omega \in A^X_p, ~\omega' \in (A^X_p)^c\},
\]
where $A^c$ stands for the complement of a subset $A$ of $\Omega$,
and let
\[
{\mathcal Q}^X_p= \{\delta_\omega\times \delta_{\omega'}: \omega,\omega' \in A^X_p\}.
\]
In what follows, we treat $\p$-a.s.~equal random variables as identical, and thus
statements like ``$X$ and $Y$ are weakly comonotonic with respect to $\mathcal P^X_p$'' should be interpreted as they hold for a  representative pair of the random variables $X$ and $Y$.

\begin{lemma} \label{lem:var}
Let $X$ and $Y$ be two continuously distributed random variables, and let $p\in (0,1)$. The following three statements are equivalent:
\begin{enumerate}[\rm (i)]
\item  $X$ and $Y$ are weakly comonotonic with respect to $\mathcal P^X_p$;
\item  $X$ and $Y$ are weakly comonotonic with respect to $\mathcal P^Y_p$;
\item $A^X_p=A^Y_p$ a.s.~with respect to $\p $.
 \end{enumerate}
\end{lemma}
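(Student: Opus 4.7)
The plan is to prove the chain (iii) $\Rightarrow$ (i) $\Rightarrow$ (iii) together with the symmetric chain (iii) $\Rightarrow$ (ii) $\Rightarrow$ (iii). The key preliminary observation is that because $\mathcal P^X_p$ consists entirely of product Dirac measures, integration trivializes and the definition of weak comonotonicity with respect to $\mathcal P^X_p$ reduces to the pointwise statement
\[
\big(X(\omega)-X(\omega')\big)\big(Y(\omega)-Y(\omega')\big)\ge 0 \qquad \text{for every } \omega\in A^X_p,\ \omega'\in (A^X_p)^c.
\]
On every such pair the definition of $A^X_p$ gives $X(\omega)>\VaR_p(X)\ge X(\omega')$, so the first factor is strictly positive. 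Therefore (i) is equivalent to the assertion that $Y(\omega)\ge Y(\omega')$ for every $\omega\in A^X_p$ and every $\omega'\in(A^X_p)^c$, i.e., that $Y$ separates $A^X_p$ from $(A^X_p)^c$ at a common level.

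The directions (iii) $\Rightarrow$ (i) and (iii) $\Rightarrow$ (ii) are then almost immediate: choosing representatives so that $A^X_p=A^Y_p$, any pair $(\omega,\omega')\in A^X_p\times(A^X_p)^c=A^Y_p\times(A^Y_p)^c$ satisfies both $X(\omega)>X(\omega')$ and $Y(\omega)>Y(\omega')$ by the very definitions of $A^X_p$ and $A^Y_p$, so the product in the integrand is non-negative.

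For the substantive implication (i) $\Rightarrow$ (iii), I set $c:=\inf\{Y(\omega):\omega\in A^X_p\}$, a real number because $Y$ is a real-valued random variable and both $A^X_p$ and $(A^X_p)^c$ are non-null (with $\p$-measures $1-p$ and $p$, respectively). The preceding pointwise observation gives $Y\ge c$ on $A^X_p$ and $Y\le c$ on $(A^X_p)^c$, hence $\p(Y\le c)\ge p$ and $\p(Y\ge c)\ge 1-p$. Invoking the continuity hypothesis $\p(Y=c)=0$ pins these inequalities down to $F_Y(c)=p$, so $c=\VaR_p(Y)$ and $A^Y_p=\{Y>c\}$. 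A short set-theoretic check then shows that $A^X_p\,\triangle\, A^Y_p\subseteq\{Y=c\}$ is a $\p$-null set, which is (iii). The implication (ii) $\Rightarrow$ (iii) is obtained by swapping the roles of $X$ and $Y$ in the same argument.

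The step I expect to require the most care is the identification $c=\VaR_p(Y)$ and the subsequent a.s.\ equality $A^X_p=A^Y_p$: without the continuity of the distribution of $Y$, the threshold $c$ could coincide with an atom of $F_Y$, allowing $A^X_p\,\triangle\, A^Y_p$ to contain a non-null level set $\{Y=c\}$ and thereby breaking (iii). This is precisely the place where the continuity hypothesis of the lemma enters, and it should be highlighted.
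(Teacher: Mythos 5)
Your proof is correct and follows essentially the same route as the paper's: both reduce weak comonotonicity with respect to $\mathcal P^X_p$ to the pointwise inequality over $A^X_p\times(A^X_p)^c$, observe that the first factor is strictly positive there so that $Y$ must separate $A^X_p$ from its complement at a common level, and then use continuity together with $\p(A^X_p)=1-p$ to conclude $A^X_p=A^Y_p$ a.s.; you merely make the separating threshold $c$ explicit where the paper compresses this into ``$Y$ takes its largest values on $A^X_p$.'' One cosmetic point: $F_Y(c)=p$ gives only $c\le \VaR_p(Y)$ (equality can fail if $F_Y$ is flat at level $p$), but this does not affect the conclusion, since $\p(c<Y\le \VaR_p(Y))=F_Y(\VaR_p(Y))-F_Y(c)=0$ and hence $\{Y>c\}=\{Y>\VaR_p(Y)\}=A^Y_p$ a.s.\ all the same.
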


\begin{proof}
We only show (i)$\Leftrightarrow$(iii) since (ii)$\Leftrightarrow$(iii) holds by symmetry.
First, we assume that statement (i) holds.
For $\omega\in A^X_p$ and $\omega'\in (A^X_p)^c$, we have
$X(\omega)-X(\omega')>0$.
By definition of weak comonotonicity, this implies
$Y(\omega)-Y(\omega')\ge 0$.
Therefore, $Y$ takes its largest values on $A^X_p$.
Since $\p(A^X_p)=p$, we have
$A^Y_p=\{Y>\VaR_p(Y)\}=A^X_p$ a.s.
Next, we assume that statement (iii) holds. Then,
for a.s.~$\omega\in A^Y_p$ and $\omega'\in (A^Y_p)^c$, we have
$X(\omega)-X(\omega')>0$ and
$Y(\omega)-Y(\omega')>0$. This gives the weak comonotonicity of $X$ and $Y$; more precisely, of a representative version of $(X,Y)$.
\end{proof}

We are now ready to state our main result on the relationship between risk aggregation and weak comonotonicity.

\begin{theorem}\label{th:var}
Let $X$ and $Y$ be two continuously distributed and integrable random variables, and let $p\in (0,1)$. We have the following two statements:
\begin{enumerate}[\rm (i)]
\item  If  $X$ and $Y$ are weakly comonotonic with respect to $\mathcal P^X_p$, and $X$ and $Y$ are weakly antimonotonic with respect to $ {\mathcal Q}^X_p$,
 then $(X,Y)$ maximizes the $\VaR_p$ aggregation;
\item $X$ and $Y$ are weakly comonotonic with respect to $\mathcal P^X_p$ if and only if  $(X,Y)$ maximizes the $\ES_p$ aggregation.
   \end{enumerate}
\end{theorem}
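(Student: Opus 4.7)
The plan is to tackle statement~(ii) first, where the argument is cleanest, and then leverage the same structural identity $A_p^X = A_p^Y$ supplied by Lemma~\ref{lem:var} to treat statement~(i), whose extra ingredient is the antimonotonic rearrangement produced by weak antimonotonicity on $\mathcal Q_p^X$.

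For~(ii), the key representation is $\ES_p(Z) = \frac{1}{1-p}\sup\{\E[Z\id_A]: \p(A)=1-p\}$, with the supremum attained uniquely (a.s.) at $A = A_p^Z$ whenever $Z$ is continuously distributed. Applied with $Z = X+Y$ and split additively, this yields the subadditivity bound $\ES_p(X'+Y') \le \ES_p(X)+\ES_p(Y)$ for any $X'\laweq X$ and $Y'\laweq Y$, so the maximum of $\ES_p$ aggregation equals $\ES_p(X)+\ES_p(Y)$ (attained, for example, by a comonotonic coupling). In the sufficiency direction, Lemma~\ref{lem:var} converts the weak-comonotonicity hypothesis into $A_p^X = A_p^Y$ a.s., and then testing $A=A_p^X$ in the representation of $\ES_p(X+Y)$ gives $\ES_p(X+Y)\ge \ES_p(X)+\ES_p(Y)$ and hence equality. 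The converse runs symmetrically: assuming $\ES_p(X+Y)=\ES_p(X)+\ES_p(Y)$, the pinch forces the three unique maximizers $A_p^{X+Y}$, $A_p^X$, $A_p^Y$ to coincide a.s., and Lemma~\ref{lem:var} closes the loop.

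For~(i), Lemma~\ref{lem:var} again delivers $A_p^X = A_p^Y =: A$, while weak antimonotonicity with respect to $\mathcal Q_p^X$ says precisely that $X$ and $Y$ are pointwise antimonotonic on $A$. Invoking Denneberg's lemma on the event $A$ under the conditional probability $\p(\cdot\mid A)$ produces a uniform random variable $U$ on $(0,1)$ with $X=F_{X|A}^{-1}(U)$ and $Y=F_{Y|A}^{-1}(1-U)$ on $A$. Because $A$ is the top-$(1-p)$ event of both margins, the conditional quantiles translate to unconditional ones, namely $F_{X|A}^{-1}(u)=\VaR_{p+u(1-p)}(X)$ and $F_{Y|A}^{-1}(1-u)=\VaR_{1-u(1-p)}(Y)$, and the substitution $s=p+u(1-p)$ gives
\[
\mathrm{ess\,inf}_{\omega\in A}\big(X(\omega)+Y(\omega)\big) = \inf_{s\in(p,1)}\big[\VaR_s(X)+\VaR_{1+p-s}(Y)\big],
\]
which is the classical Makarov--Rüschendorf formula. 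Combining the elementary lower bound $\VaR_p(X+Y)\ge \mathrm{ess\,inf}_A(X+Y)$ (valid because $A$ has probability $1-p$) with the Makarov--Rüschendorf upper bound $\VaR_p(X'+Y')\le \inf_{s\in(p,1)}[\VaR_s(X)+\VaR_{1+p-s}(Y)]$ for arbitrary couplings pinches both sides to equality, and $(X,Y)$ attains the maximum.

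The main obstacle I anticipate is in the proof of~(i): one must show that pointwise antimonotonicity on the event $A$ (a set of positive but otherwise arbitrary structure) is realized via Denneberg's antimonotonic quantile coupling under $\p(\cdot\mid A)$, and that the resulting essential infimum matches the Makarov--Rüschendorf bound \emph{exactly} rather than merely bounding it. Continuity of the marginals is essential at two points: it yields $\p(A)=1-p$ exactly, avoiding boundary mass that would corrupt the conditional-quantile identities in~(i), and it secures the uniqueness of the $\ES_p$ maximizer required for the necessity direction of~(ii). By contrast, once Lemma~\ref{lem:var} is in hand, (ii) reduces to bookkeeping around this uniqueness property.
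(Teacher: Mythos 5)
Your proposal is correct and follows essentially the same route as the paper: Lemma~\ref{lem:var} to identify the top events, the dual representation of $\ES_p$ with uniqueness of its maximizing set for part (ii), and the countermonotonic quantile coupling on the common top event matched against the Makarov--R\"uschendorf bound for part (i). The only cosmetic differences are that you sandwich $\VaR_p(X+Y)$ between the essential infimum on $A$ and the Makarov bound rather than computing it exactly via $U=F_X(X)$, and in the necessity direction of (ii) the pinch should be phrased with an arbitrary maximizing set $B$ for $X+Y$ (which need not be continuously distributed) rather than a ``unique'' $A_p^{X+Y}$.
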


\begin{proof}
First, we prove statement (i). By Lemma~\ref{lem:var},   $A^X_p=A^Y_p$ a.s.
Also note that  $X$ and $Y$ are (strongly) antimonotonic on the set $A^X_p$.
Let $U=F_X(X)$, which is uniformly distributed on $[0,1]$, and we know that $X$ and $U$ are strongly comonotonic.
As a consequence, $X=\VaR_{U}(X)$ a.s., and the sets $A^X_p$, $A^Y_p$ and  $\{U>p\}$ are a.s.~equal.
Because   $Y$  and $U$ are antimonotonic on the set $\{U>p\}$,
if $U$ takes value $u\in (p,1)$, then $Y$ takes the value
$\VaR_{1+p-u}(Y)$ a.s., and hence
 $Y=\VaR_{1+p-U}(Y)$ a.s.~on $\{U>p\}$.
Further, note that if $U\le p$, then $X+Y\le  \VaR_p(X)+\VaR_p(Y)$ a.s.~and if $U> p$, then  $X+Y\ge   \VaR_p(X)+\VaR_p(Y)$ a.s.
As a consequence, by definition of the $p$-quantile $(\VaR_p)$,  $\VaR_p(X+Y)$ is the smallest value ($\p$-a.s.) $X+Y$ takes on the set $\{U> p\}$, which is
the smallest value of $\VaR_{U}(X)+ \VaR_{1+p-U}(Y)$ for $U\in (p,1)$.
Therefore,
$$\VaR_p(X+Y)=
\inf \{\VaR_{p+t}(X)+\VaR_{1-t}(Y): t\in (0,1-p)\}.$$
This gives the maximum value of the $\VaR_p$ aggregation according to \citet[equation (2)]{M81} or \citet[Proposition 8.31]{MFE15}, thus concluding the proof of statement (i).

To prove statement (ii), we need some preliminaries. Namely, we use the dual representation of $\ES_p$ in the form
\begin{equation}\label{eq:ESrep} \ES_p(Z)=\max\{ \E[Z\mid B]: B\in \mathcal F,~\p(B)=1-p\}\end{equation}
for any random variable $Z$, and $B=A^Z_p$ attains the maximum in~\eqref{eq:ESrep} if  $Z$ is continuously distributed \citep[e.g.,][Lemma 3.1]{EW15}.
Because of subadditivity of $\ES_p$, we have
$$\ES_p(X)+\ES_p(Y)=\max\{ \VaR_p(X'+Y'): X'\laweq X, ~Y'\laweq Y\}.$$
Hence, $(X,Y)$ maximizes the $\ES_p$ aggregation if and only if $\ES_p(X+Y)=\ES_p(X)+\ES_p(Y)$.
Note that $\ES_p(X+Y)\le \ES_p(X)+\ES_p(Y)$ always holds. Now we are able to establish the ``if and only if'' statement (ii).
\begin{itemize}
  \item[($\Rightarrow$)]
Suppose that $X$ and $Y$ are weakly comonotonic with respect to $\mathcal P^X_p$.
This implies  $A^X_p=A^Y_p$ a.s.~by Lemma~\ref{lem:var}.
Therefore, by equation~\eqref{eq:ESrep},
\begin{align*}
 \ES_p(X+Y) \ge  \E\left[X+Y  \mid A^X_p\right]  =  \E\left[X \mid A^X_p\right] + \E\left[Y \mid A^Y_p\right] = \ES_p(X)+\ES_p(Y).
\end{align*}
Hence, $(X,Y)$ maximizes the $\ES_p$ aggregation.
  \item[($\Leftarrow$)]
Suppose that $(X,Y)$ maximizes the $\ES_p$ aggregation. Then, using equation~\eqref{eq:ESrep},  we have, for some $B\in\mathcal F$,
\begin{align*}
 \E\left[X+Y \mid B\right]= \ES_p(X+Y)= \ES_p(X)+\ES_p(Y)&= \E\left[X \mid A^X_p\right] + \E\left[Y \mid A^Y_p\right]\\&\ge \E\left[X \mid B\right]+ \E\left[Y \mid B\right].
\end{align*}
Therefore, $ \E [X \mid A^X_p ]  =  \E [X \mid B]$. Since $X$ is continuously distributed and takes its largest values on $A^X_p$, and $\p(A^X_p)=1-p=\p(B)$,
we conclude that $A^X_p=B$ a.s. Similarly, we conclude that $A^Y_p=B$ a.s. Using Lemma~\ref{lem:var} again, we obtain that $X$ and $Y$ are weakly comonotonic with respect to $\mathcal P^X_p$
\end{itemize}
This finishes the proof of Theorem~\ref{th:var}.
\end{proof}

Note that the weak comonotonicity condition on $\mathcal P^X_p$  in Theorem~\ref{th:var} is truly weaker than strong comonotonicity, as it does not specify the copula of $X$ and $Y$.
As discussed by \citet[Section~3]{EPRWB14}, the typical worst-case scenario of VaR aggregation is a combination of positive dependence and negative dependence in some non-rigorous sense.
Theorem~\ref{th:var}(i) answers precisely what these non-rigorous positive and negative dependence structures mean: weak comonotonicity with respect to $\mathcal P^X_p$
and weak antimonotonicity with respect to  $\mathcal Q^X_p$.
Furthermore, Theorem~\ref{th:var}(ii) gives a necessary and sufficient condition for the dependence structure maximizing the $\ES_p$ aggregation.

As a direct consequence of Theorem~\ref{th:var},  there exists a dependence structure that maximizes the $\VaR_p$ and $\ES_p$ aggregations simultaneously, as specified in Theorem~\ref{th:var}(i).
Note that the weak comonotonicity of $X$ and $Y$ with respect to $\mathcal P^X_p$ can be interpreted as a positive dependence in which the large values of $X$ and $Y$  appear simultaneously; but they are not perfectly aligned as in strong comonotonicity. It is straightforward to see, however, that this dependence structure, although necessary and sufficient for the $\ES_p$ aggregation, is not necessary for the $\VaR_p$ aggregation. For instance, if $Y$ is positive and $X(\omega)$ is large enough, say $X(\omega)>\VaR_p(X+Y)$, then it does not matter what value $Y(\omega)$ takes because it does not affect the calculation of $\VaR_p(X+Y)$.

\begin{remark}\rm
 Theorem~\ref{th:var}(ii) is formulated for a specific $p\in (0,1)$. If one likes $(X,Y)$ to maximize $\ES_p$ aggregation for all $p\in (0,1)$ or, equivalently, maximize the convex order of the sum, then strong comonotonicity is the only dependence structure
 \citep[e.g.,][Theorem 3]{C10}.
 This, in particular,  highlights the lack of practical attractiveness of the classical notion of comonotonicity, as it is unnecessarily too strong, at least from the perspective of $\ES_p$ aggregation. Indeed, practical considerations place emphasis on special values of $p$, usually specified by regulators, and they are, for example, close to 1 in banking and insurance (e.g., Basel IV and Solvency II; see \cite{MFE15}). More generally, we can think of examples when we would be concerned with $p$'s in certain subinterval of $(0,1)$, but  not in the entire interval $(0,1)$. This serves yet another justification for the introduction and explorations of the notion of weak comonotonicity.
\end{remark}

\begin{remark}\rm
The VaR aggregation  problem is equivalent to the problem of maximizing or minimizing $\p(X+Y>x)$ for a given $x\in \R$ and given marginal distributions of $X$ and $Y$. Indeed, this is the problem originally studied by \cite{M81} and \cite{R82}.
It has become well known since then that comonotonicity does not maximize or minimize the probability $\p(X+Y>x)$, and hence it is not the right notion to describe the corresponding dependence structures.
\end{remark}

\section{Some properties of weak comonotonicity}
\label{sect-indep}

In this section we explore some properties of weak comonotonicity, and its relation to notions of dependence structures and measures of association.


\subsection{Point-masses and comonotonicity}
\label{sect-31}

We have already noted that point masses reduce weak comonotonicity to strong comonotonicity, but the class
\[
\mathcal R_{g,h}=\big \{\rho_1\times \rho_2 : h \textrm{ and } g \textrm{ are weakly comonotonic with respect to } \rho_1\times \rho_2 \big \}
\]
depends, naturally, on the functions $g$ and $h$. In a sense, we can circumvent this dependence by introducing certain classes of point masses. Define
\[
\mathcal R_{\rm c}=\{\delta_x\times \delta_{x'} : x,x'\in \R\}
\]
and
\[
\mathcal R_{\rm a}=\{\delta_x\times \delta_x : x\in \R\}.
\]
Note that $\mathcal R_{g,h}$ is the largest set of product measures $\rho_1\times \rho_2$ with respect to which $g$ and $h $ are weakly comonotonic. The set $\mathcal R_{g,h}$ is never empty because $\mathcal R_{\rm a} \subseteq \mathcal R_{g,h}$. Finally, we note that for any two functions $g$ and $h$, the inclusions $\mathcal R_{\rm a}\subseteq \mathcal R_{g,h}$ and $\mathcal R_{\rm a} \subseteq \mathcal R_{\rm c}$ always hold.

\begin{theorem}\label{th-10}
We have the following two statements:
\begin{enumerate}[\rm (i)]
\item \label{th-10-a}
$\mathcal R_{g,h}\supseteq \mathcal R_{\rm c}$ if and only if $g$ and $h$ are strongly comonotonic.
\item \label{th-10-b}
$\mathcal R_{g,h}=\mathcal R_{\rm a}$ if and only if $g$ and $h$ are strongly antimonotonic and injective on $\R$.
\end{enumerate}
\end{theorem}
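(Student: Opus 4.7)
The plan is to treat parts (i) and (ii) separately: (i) is essentially immediate from unpacking the definition of weak comonotonicity, while (ii) requires a short support argument on a product measure. For (i), observe that when $\rho_1\times\rho_2=\delta_x\times\delta_{x'}$, the double integral in~\eqref{eq:def1} collapses to $(g(x)-g(x'))(h(x)-h(x'))$. Hence $\mathcal R_{g,h}\supseteq\mathcal R_{\rm c}$ is literally the statement that $(g(x)-g(x'))(h(x)-h(x'))\ge 0$ for all $x,x'\in\R$, which is the definition of strong comonotonicity in~\eqref{comon-0}; both directions then follow at once.

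For (ii), the inclusion $\mathcal R_{\rm a}\subseteq\mathcal R_{g,h}$ always holds because the integrand vanishes on the diagonal, so only the reverse inclusion carries content. For the direction $(\Rightarrow)$ I would argue by contradiction using point masses supplied by the hypothesis. If $(g(x_1)-g(x_2))(h(x_1)-h(x_2))>0$ for some pair $x_1,x_2$, then $\delta_{x_1}\times\delta_{x_2}\in\mathcal R_{g,h}=\mathcal R_{\rm a}$ forces $x_1=x_2$, contradicting the strict positivity; this gives strong antimonotonicity. If instead $g(x_1)=g(x_2)$ with $x_1\ne x_2$, then the integrand is again zero, so $\delta_{x_1}\times\delta_{x_2}$ lies in $\mathcal R_{g,h}\setminus\mathcal R_{\rm a}$, a contradiction; injectivity of $h$ is symmetric.

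For the direction $(\Leftarrow)$, assume $g$ and $h$ are strongly antimonotonic and injective. Then for every $x\ne x'$ both differences $g(x)-g(x')$ and $h(x)-h(x')$ are nonzero with opposite signs, so $(g(x)-g(x'))(h(x)-h(x'))<0$ strictly off the diagonal and equals $0$ on it. Taking any $\rho_1\times\rho_2\in\mathcal R_{g,h}$ with nonzero factors, a non-positive integrand whose integral is non-negative must vanish $\rho_1\times\rho_2$-almost everywhere, so $\rho_1\times\rho_2$ is supported on $\Delta=\{(x,x):x\in\R\}$.

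The main obstacle---and the step requiring the most care---is then to conclude that such a product measure must in fact be (proportional to) a single $\delta_x\times\delta_x$. By Fubini, $(\rho_1\times\rho_2)(\R^2\setminus\Delta)=0$ rewrites as
\[
\int_{\R}\bigl(\rho_2(\R)-\rho_2(\{x\})\bigr)\,\rho_1(\mathrm{d}x)=0,
\]
and since the integrand is non-negative this forces $\rho_2(\{x\})=\rho_2(\R)$ for $\rho_1$-a.e.~$x$. A given measure can be concentrated at only one point, so the set where this occurs is a singleton $\{x_0\}$ (assuming $\rho_2\ne 0$), on which $\rho_1$ must itself be concentrated. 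Thus $\rho_1\times\rho_2$ is a positive multiple of $\delta_{x_0}\times\delta_{x_0}\in\mathcal R_{\rm a}$, completing the argument.
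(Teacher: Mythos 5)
Your proof is correct and follows essentially the same route as the paper's: part (i) by collapsing the integral at point masses, and part (ii) by testing off-diagonal point masses for the forward direction and a support-on-the-diagonal argument for the converse. The only difference is that you spell out, via Fubini, why a product measure concentrated on the diagonal must come from a single point mass --- a step the paper simply asserts from the product structure.
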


\begin{proof}
Statement~\eqref{th-10-a} is trivial. To prove statement~\eqref{th-10-b}, we first note that if $\mathcal R_{g,h}=\mathcal R_{\rm a}$,
then for any two $x,x'\in \R$ which are not identical,
we have $\delta_x\times \delta_{x'}\not \in \mathcal R_{g,h}$.
Thus, $(g(x)-g(x'))(h(x)-h(x'))<0$, and the desired injectivity and antimonotonicity follow.
 Next, assume  injectivity and antimonotonicity. Then,
 $(g(x)-g(x'))(h(x)-h(x'))<0$ for all $x,x'\in \R$ that are not identical.
For any product measure $\rho_1\times \rho_2$,
if condition~\eqref{eq:def1} holds, then
$\rho_1\times \rho_2$ must be supported in the points $(x,x')$ where either
$g(x)=g(x')$ or $h(x)=h(x')$,
and hence $x=x'$.
Since $\rho_1\times \rho_2$ is a product measure, we know that it has to be of the form $\delta_x\times \delta_x$ for $x\in \R$. This concludes the proof of Theorem~\ref{th-10}.
\end{proof}

We now turn our attention to random variables $X$ and $Y$. Similarly to $\mathcal R_{g,h}$, let
\[
\mathcal P_{X,Y}=\big \{\pi_1\times \pi_2 : X \textrm{ and } Y \textrm{ are weakly comonotonic with respect to } \pi_1\times \pi_2 \big \}.
\]
In other words, $\mathcal P_{X,Y}$ is the largest set of product measures with respect to which $X$ and $Y$ are weakly comonotonic.
It is a symmetric set with respect to $X$ and $Y$, that is, we have $\mathcal P_{X,Y}=\mathcal P_{Y,X}$. The validity of this symmetry easily follows from the equation
\begin{align}
\iint _{\Omega^2}  (X(\omega)&-X(\omega') )(Y(\omega)-Y(\omega')) \pi_1(\d \omega)\pi_2( \d \omega')
\notag
\\
&= \E^{\pi_1}[XY] +  \E^{\pi_2}[XY] - \E^{\pi_1}[X]\E^{\pi_2}[Y] - \E^{\pi_2}[X]\E^{\pi_1}[Y].
\label{eq:expect}
\end{align}
It also follows from the latter equation that if $\pi_1=\pi_2=:\pi $, then condition~\eqref{eq:def2} means that the correlation of $X$ and $Y$ under the measure $\pi$ is non-negative. Finally, we note that $\mathcal P_{X,Y}$ is invariant under all increasing linear marginal transforms, that is, the equation $\mathcal P_{\lambda_1 X+a_1,\lambda_2 Y+a_2} = \mathcal P_{X,Y}$ holds for all $\lambda _1,\lambda_2>0$ and $a_1,a_2\in \R$.

\begin{theorem}\label{th-11}
Let $\mathcal P_{\rm a}=\{\delta_\omega \times \delta_\omega : \omega \in \Omega\}$ and $\mathcal P_{\rm c}=\{\delta_\omega \times \delta_{\omega'} : \omega,\omega' \in \Omega\}$. We have the following two statements:
\begin{enumerate}[\rm (i)]
\item
$\mathcal P_{X,Y}\supseteq \mathcal P_{\rm c}$ if and only if $X$ and $Y$ are strongly comonotonic.
\item
$\mathcal P_{X,Y}=\mathcal P_{\rm a}$ if and only if $X$ and $Y$ are strongly antimonotonic and injective on $\Omega$.
\end{enumerate}
\end{theorem}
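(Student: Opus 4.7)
The plan is to mirror the argument of Theorem~\ref{th-10}, with an additional push-forward step needed to transport the measure-theoretic reasoning from the abstract space $(\Omega,\mathcal F)$ to the Borel line, where it becomes tractable. Statement~(i) is essentially immediate: strong comonotonicity makes the integrand in~\eqref{eq:def2} pointwise non-negative, which gives $\mathcal P_{X,Y}\supseteq \mathcal P_{\rm c}$ (in fact, against every product probability measure); conversely, specializing $\pi_1\times\pi_2=\delta_\omega\times\delta_{\omega'}$ in~\eqref{eq:def2} recovers the strong comonotonicity inequality at each $(\omega,\omega')$. The ``only if'' direction of~(ii) is equally short: if $\mathcal P_{X,Y}=\mathcal P_{\rm a}$, then $\delta_\omega\times\delta_{\omega'}\notin\mathcal P_{X,Y}$ whenever $\omega\ne\omega'$, yielding the strict inequality $(X(\omega)-X(\omega'))(Y(\omega)-Y(\omega'))<0$, and hence both injectivity and antimonotonicity of $X,Y$.

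The substantive part is the ``if'' direction of~(ii). I would assume that $X,Y$ are strongly antimonotonic and injective, take any $\pi_1\times\pi_2\in\mathcal P_{X,Y}$, and show it must lie in $\mathcal P_{\rm a}$; together with the trivial inclusion $\mathcal P_{\rm a}\subseteq\mathcal P_{X,Y}$ (the integrand vanishes on the diagonal), this gives equality. Antimonotonicity makes the integrand in~\eqref{eq:def2} non-positive on $\Omega^2$, and combined with injectivity it becomes strictly negative off the diagonal $\Delta=\{(\omega,\omega):\omega\in\Omega\}$. Injectivity of $X$ also gives $\Delta=(X\times X)^{-1}(\Delta_\R)$, where $\Delta_\R\subset\R^2$ is the Borel diagonal, so $\Delta\in\mathcal F\otimes\mathcal F$, and~\eqref{eq:def2} then forces $(\pi_1\times\pi_2)(\Delta)=1$. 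Pushing forward via $X\times X$ yields the product $X_*\pi_1\times X_*\pi_2$ on $\R^2$, concentrated on $\Delta_\R$.

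The remaining step---the main technical point---is to show that a product probability measure $\mu\times\nu$ on $\R^2$ concentrated on $\Delta_\R$ must be a diagonal Dirac $\delta_x\times\delta_x$. For every Borel $A\subseteq \R$, the sets $A\times A^c$ and $A^c\times A$ are disjoint from $\Delta_\R$, so $\mu(A)\nu(A^c)=\mu(A^c)\nu(A)=0$; a short case check forces $\mu(A)=\nu(A)\in\{0,1\}$, so $\mu=\nu$ is a $\{0,1\}$-valued Borel probability on $\R$, hence a Dirac $\delta_x$. Then $\pi_1(X^{-1}\{x\})=\pi_2(X^{-1}\{x\})=1$, and injectivity makes $X^{-1}\{x\}$ a singleton $\{\omega_0\}\in\mathcal F$; hence $\pi_1=\pi_2=\delta_{\omega_0}$, and $\pi_1\times\pi_2\in\mathcal P_{\rm a}$. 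The chief obstacle is the abstractness of $(\Omega,\mathcal F)$, where one cannot directly invoke the support-based reasoning that worked on $\R$ in Theorem~\ref{th-10}; this is circumvented by using injectivity of $X$ twice: once to make $\Delta$ measurable, and once to pull back the Dirac conclusion from $\R$ to $\Omega$.
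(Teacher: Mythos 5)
Your proof is correct and follows essentially the route the paper intends: the paper omits the proof of Theorem~\ref{th-11}, declaring it analogous to Theorem~\ref{th-10}, and your argument is precisely that analogy carried out carefully. In fact you supply two details that the paper's proof of Theorem~\ref{th-10} glosses over --- the measurability of the diagonal in the abstract space (via injectivity and the push-forward under $X\times X$) and the verification that a product probability measure concentrated on the diagonal must be a diagonal Dirac --- so your write-up is, if anything, more complete than the template it mirrors.
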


Note that $\mathcal P_{\rm a}\subseteq \mathcal P_{X,Y}$ and $\mathcal P_{\rm a}\subseteq \mathcal P_{\rm c}$. The proof of Theorem~\ref{th-11} is analogous to that of Theorem~\ref{th-10} and is therefore omitted.

\subsection{Set-masses and independence}
\label{sect-32}

We now go back to the integral, for  a probability space  $(\Omega,\mathcal F, \p)$,
$$
\iint _{\Omega^2}  (X(\omega)-X(\omega') )(Y(\omega)-Y(\omega')) \p(\d \omega)\p( \d \omega')
$$
 and distort, or rather weight, its probabilities. This gives rise to the integral
\begin{equation}\label{corr-probab}
\iint _{\Omega^2}  (X(\omega)-X(\omega') )(Y(\omega)-Y(\omega')) \p_{W_1}(\d \omega)\p_{W_2}( \d \omega'),
\end{equation}
where, for two random variables $W_1\ge 0$ and $W_2\ge 0$, the probability measure $\p_{W_1}$ is defined via the equation
$$
\p_{W_1}(\d \omega)={W_1(\omega) \over \mathbb{E}^{\p}[W_1]}\p(\d \omega),
$$
with $\p_{W_2}$ defined analogously. We next explore the case when the weights $W_1$ and $W_2$ are the indicators $\mathbb{I}_A$ and  $\mathbb{I}_B$, respectively, where $A$ and $B$ are elements of the $\sigma$-field $\mathcal F$.

Let $\sigma(X)$ denote the $\sigma$-field generated by $X$, and let
\[
\sigma^+(X)=\{A\in \sigma (X): \p(A)>0\}.
\]
For any event $A\in \sigma ^+(X)$, let $\p_A$ be the conditional probability of $\p$ on  $A$. We call these conditional probabilities set masses, which are natural extensions of the earlier explored point masses.

We shall next connect weak comonotonicity with (in)dependence of random variables $X$ and $Y$. It is instructive to start with the bivariate Gaussian case, and the following proposition is akin to the classical result which says that the equivalence of uncorrelatedness and independence characterizes Gaussian random variables.

\begin{proposition}\label{prop:gauss}
Let $(X,Y)$ be jointly Gaussian with standard margins and correlation $c\in [-1,1]$.  Then the following three statements are equivalent:
\begin{enumerate}[\rm (i)]
  \item $c\ge 0$;
  \item $\big \{\p_A\times \p_B: A, B\in \sigma^+(X)\big \} \subseteq \mathcal P_{X,Y}$;
  \item $\big \{\p_A\times \p_A: A\in \sigma^+(X)\big \} \subseteq \mathcal P_{X,Y}$.
\end{enumerate}
\end{proposition}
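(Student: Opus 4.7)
My plan is to prove the chain (ii)$\Rightarrow$(iii)$\Rightarrow$(i)$\Rightarrow$(ii), exploiting the bivariate Gaussian representation $Y = cX + \sqrt{1-c^2}\,Z$ with $Z$ standard normal and independent of $X$.

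The implication (ii)$\Rightarrow$(iii) is immediate: set $B=A$. For (iii)$\Rightarrow$(i), I take $A=\Omega\in\sigma^+(X)$, so that $\p_A=\p$. Then the weak comonotonicity condition~\eqref{eq:def2} reads $2\,\mathrm{Cov}[X,Y]\ge 0$, and since $(X,Y)$ has standard Gaussian margins, $\mathrm{Cov}[X,Y]=c$, whence $c\ge 0$.

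The substantive direction is (i)$\Rightarrow$(ii). I would use the expansion~\eqref{eq:expect} with $\pi_1=\p_A$ and $\pi_2=\p_B$ for arbitrary $A,B\in\sigma^+(X)$. Because $A,B\in\sigma(X)$ and $Z$ is independent of $X$, conditioning on either event does not disturb $Z$: that is, $\E^{\p_A}[Z]=\E^{\p_B}[Z]=0$, and $\E^{\p_A}[XZ]=\E^{\p_A}[X]\E[Z]=0$, likewise for $B$. Substituting $Y=cX+\sqrt{1-c^2}\,Z$ into each of the four terms on the right-hand side of~\eqref{eq:expect} therefore collapses them to
\[
c\Bigl(\E^{\p_A}[X^2]+\E^{\p_B}[X^2]-2\,\E^{\p_A}[X]\,\E^{\p_B}[X]\Bigr).
\]
The factor in parentheses is non-negative: Jensen's inequality yields $\E^{\p_A}[X^2]\ge (\E^{\p_A}[X])^2$ and $\E^{\p_B}[X^2]\ge (\E^{\p_B}[X])^2$, and then AM--GM gives $(\E^{\p_A}[X])^2+(\E^{\p_B}[X])^2\ge 2\E^{\p_A}[X]\,\E^{\p_B}[X]$. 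Hence $c\ge 0$ forces the whole expression to be non-negative, proving $\p_A\times\p_B\in\mathcal P_{X,Y}$.

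The only delicate point is the legitimacy of conditioning: I must insist that $A,B\in\sigma(X)$ so that $Z\perp X$ still implies that $Z$ is independent of the conditioning event; without the restriction $A,B\in\sigma^+(X)$, the term $\E^{\p_A}[XZ]$ need not factor, and the same simplification would fail. This is exactly why the statement restricts to the sub-$\sigma$-field generated by $X$, and why the argument works uniformly in $c\in[-1,1]$ (including the degenerate cases $c=\pm 1$, where $Y=cX$ and the decomposition degenerates harmlessly). No further obstacle is anticipated.
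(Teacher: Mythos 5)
Your proposal is correct and takes essentially the same route as the paper's proof: the same representation $Y=cX+\sqrt{1-c^2}\,Z$ with $Z$ independent of $X$, the same collapse of the four terms in~\eqref{eq:expect} to $c\bigl(\E[X^2\mid A]+\E[X^2\mid B]-2\,\E[X\mid A]\,\E[X\mid B]\bigr)$, and the same Jensen plus AM--GM argument for non-negativity. The only cosmetic difference is that you arrange the equivalences as a cycle and extract $c\ge 0$ from the single choice $A=\Omega$, whereas the paper phrases (i)$\Leftrightarrow$(iii) through the conditional covariance inequality for general $A$.
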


\begin{proof}
We first write $Y=cX+\sqrt{1-c^2}Z$ for some standard Gaussian $Z$ independent of $X$. For any $A\in \sigma^+(X)$, we have
$\E[XY \mid A]=\E[cX^2 \mid A]$ and $\E[Y \mid A]=\E[cX \mid A]$.
Therefore, the following holds if and only if $c\ge 0$:
$$
 \E[XY \mid A]   =c \E[X^2 \mid A] \ge c(\E[X \mid A])^2 = \E[X \mid A] \E[Y \mid A].
$$
Furthermore, we check that, for $c\ge 0$,
 \begin{align*}
 \E[XY \mid A]  + \E[XY \mid B]  - \E[X \mid A] & \E[Y \mid B] - \E[Y \mid A]\E[X \mid B] \\& = c \E[X^2 \mid A]  + c\E[X^2 \mid B] - 2c\E[X \mid A] \E[X \mid B]\\& \ge  c\left( \E[X^2 \mid A]  + \E[X^2 \mid B] - (\E[X \mid A])^2 -(\E[X \mid B])^2\right) \ge 0.
\end{align*}
This establishes the proposition.
\end{proof}

Generally, $\{\p_A\times \p_B: A, B\in \sigma^+(X)\} \subseteq \mathcal P_{X,Y}$ and $\{\p_A\times \p_A: A\in \sigma^+(X)\} \subseteq \mathcal P_{X,Y}$ are not equivalent conditions, although  they are in the Gaussian case, as we have just seen in Proposition~\ref{prop:gauss}.

\begin{proposition} We have the following statements:
\begin{enumerate}[\rm (i)]
  \item If $X$ and $Y$ are independent, then
$\{\p_A\times \p_B: A, B\in \sigma^+(X)\} \subseteq \mathcal P_{X,Y}$
and, by symmetry,
$\{\p_A\times \p_B: A, B\in \sigma^+(Y)\} \subseteq \mathcal P_{X,Y}$.
  \item If $\{\p_A\times \p_B: A, B\in \sigma^+(X)\} \subseteq \mathcal P_{X,Y}$, then, for $A,B\in \sigma^+(X)$, we have the property
\[
 \E[XY \mid A]  + \E[XY \mid B] - \E[X \mid A] \E[Y \mid B] - \E[Y \mid A]\E[X \mid B]\ge 0,
\]
which in the ``diagonal'' case $A=B$ reduces to non-negativity of the conditional correlation $\mathrm{Corr}[X,Y\mid A]$ for every event $A\in \sigma^+(X)$.
\end{enumerate}
\end{proposition}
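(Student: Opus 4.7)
The plan is to base both parts on the algebraic identity~\eqref{eq:expect}, specialized to $\pi_1=\p_A$ and $\pi_2=\p_B$. Since $\E^{\p_A}$ coincides with the conditional expectation $\E[\,\cdot\mid A]$, formula~\eqref{eq:expect} becomes
\[
\iint_{\Omega^2}(X(\omega)-X(\omega'))(Y(\omega)-Y(\omega'))\,\p_A(\d\omega)\p_B(\d\omega') = \E[XY\mid A]+\E[XY\mid B]-\E[X\mid A]\E[Y\mid B]-\E[Y\mid A]\E[X\mid B].
\]
Both parts of the proposition then collapse to a short verification.

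For part~(i), fix $A,B\in\sigma^+(X)$. Because $\mathbb{I}_A$ and $\mathbb{I}_B$ are $\sigma(X)$-measurable, independence of $X$ and $Y$ yields $\E[XY\mid A]=\E[X\mid A]\,\E[Y]$, $\E[XY\mid B]=\E[X\mid B]\,\E[Y]$, and $\E[Y\mid A]=\E[Y\mid B]=\E[Y]$. Substituting these into the right-hand side of the identity above, every term cancels and the integral equals $0$, so $\p_A\times\p_B\in\mathcal P_{X,Y}$. The symmetric inclusion with $A,B\in\sigma^+(Y)$ follows either by reversing the roles of $X$ and $Y$ in the same argument, or directly from the symmetry $\mathcal P_{X,Y}=\mathcal P_{Y,X}$ noted after equation~\eqref{eq:expect}.

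For part~(ii), the hypothesis $\p_A\times\p_B\in\mathcal P_{X,Y}$ asserts precisely that the double integral on the left is non-negative; rewriting it through the identity reproduces the claimed inequality verbatim. Specializing to $A=B$, the right-hand side collapses to $2\left(\E[XY\mid A]-\E[X\mid A]\E[Y\mid A]\right)=2\,\mathrm{Cov}[X,Y\mid A]$, whose non-negativity is equivalent to $\mathrm{Corr}[X,Y\mid A]\ge 0$ whenever the conditional variances of $X$ and $Y$ on $A$ are strictly positive (and the correlation statement is vacuous otherwise).

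The whole argument is a direct unwinding of definitions once~\eqref{eq:expect} is in hand, so there is no substantive obstacle. The one point requiring care is the measurability step in part~(i): it is precisely because $A,B\in\sigma(X)$ that one may factor $\E[XY\mathbb{I}_A]=\E[X\mathbb{I}_A]\E[Y]$ under independence. If $A,B$ were instead allowed to range over the full $\sigma$-field $\mathcal F$, the identity would not force the integral to vanish, which is why the hypothesis of part~(i) is restricted to $\sigma^+(X)$ rather than to all of $\mathcal F$.
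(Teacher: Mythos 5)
Your proof is correct and follows essentially the same route as the paper: both parts are read off from identity~\eqref{eq:expect} with $\pi_1=\p_A$, $\pi_2=\p_B$, using independence together with $A,B\in\sigma(X)$ to make every term cancel in part~(i), and unwinding the definition of $\mathcal P_{X,Y}$ (with the diagonal specialization to $2\,\mathrm{Cov}[X,Y\mid A]$) in part~(ii). The remark on why the hypothesis must restrict $A,B$ to $\sigma^+(X)$ is a worthwhile addition not spelled out in the paper.
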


\begin{proof}
To prove part (i), we use equation~\eqref{eq:expect} and have
  \begin{align*}
&\iint _{\Omega^2}  (X(\omega)-X(\omega') )(Y(\omega)-Y(\omega')) \p_A(\d \omega)\p_B( \d \omega')
\\
&=  \E[XY \mid A]  + \E[XY \mid B] - \E[X \mid A] \E[Y \mid B] - \E[Y \mid A]\E[X \mid B]
\\
&=  \E[X \mid A] \E[Y]  + \E[X \mid B] \E[Y] - \E[X \mid A] \E[Y ] - \E[Y ]\E[X \mid B] = 0.
\end{align*}
Hence
$\{\p_A\times \p_B: A, B\in \sigma^+(X)\} \subseteq \mathcal P_{X,Y}$. The other half of (i) is by symmetry. The proof of statement (ii) is a straightforward verification.
\end{proof}

\subsection{Weak comonotonicity and measures of association}
\label{sect-assoc}

The notion of weak comonotonicity has enabled us to establish a whole spectrum of comonotonicity notions, ranging from the classical (strong) comonotonicity under the pairs of all point masses to weaker comonotonicity notions under the pairs of more elaborate measures. As we shall see next, this flexibility enables us to capture a whole array of measures of association.

\begin{enumerate}[(S1)]
\item\label{S1}
The Pearson correlation $\mathrm{Corr}(X,Y)$ is non-negative if and only if $X$ and $Y$ are weakly comonotonic with respect to  $\mathbb{P}\times \mathbb{P}$.

\item\label{S2}
Two random variables $X$ and $Y$ are positively associated (also called positively function dependent; see \cite{J97} for details) if and only if for all non-decreasing functions $h$ and $g$, the random variables $h(X)$ and $g(Y)$ are  weakly comonotonic with respect to  $\mathbb{P}\times \mathbb{P}$.

\item\label{S3}
Assuming that $X$ and $Y$ have continuous cdf's $F_X$ and $F_Y$, respectively,  the Spearman correlation is non-negative if and only if  $F_X(X)$ and $F_Y(Y)$ are  weakly comonotonic with respect to the product $\mathbb{P}\times \mathbb{P}$.

\item\label{S4}
Two random variables $X$ and $Y$ are independent if and only if, for all $A, B\in \mathcal B$, the indicators $\mathbb{I}_{\{X\in A\}}$  and $\mathbb{I}_{\{Y\in B\}}$ are weakly comonotonic with respect to  $\mathbb{P}\times \mathbb{P}$.  The same statement holds if we replace weak comonotonicity by weak antimonotonicity.
\end{enumerate}


All the above statements are straightforward and follow from the equivalence of weak comonotonicity (with respect to  $\mathbb{P}\times \mathbb{P}$) and covariance non-negativity. The fourth property, however, warrants a simple comment-like proof.

\begin{proof}[Proof of (S\ref{S4})] It is obvious that independence implies  weak comonotonicity, as well as weak antimonotonicity, of $\mathbb{I}_{\{X\in A\}}$  and $\mathbb{I}_{\{Y\in B\}}$.
For the other direction,
let $(X',Y')$ be an independent copy of $(X,Y)$. For all $A, B\in \mathcal B$, we have
\[
\mathbb{E}[ (\mathbb{I}_{\{X\in A\}} - \mathbb{I}_{\{X'\in A\}} ) (\mathbb{I}_{\{Y\in B\}} - \mathbb{I}_{\{Y'\in B\}})]
 = 2\mathbb{P}(X\in A, Y\in B)- 2\mathbb{P}(X\in A)\mathbb{P}(Y\in B),
\]
which is non-negative. Likewise, we have
\[
\mathbb{E}[ (\mathbb{I}_{\{X\in A\}} - \mathbb{I}_{\{X'\in A\}} ) (\mathbb{I}_{\{Y\in B^c\}} - \mathbb{I}_{\{Y'\in B^c\}})]\\
= 2\mathbb{P}(X\in A, Y\in B^c)- 2\mathbb{P}(X\in A)\mathbb{P}(Y\in B^c ),
\]
which is also non-negative. Adding the left-hand sides of the two equations gives zero, which, due to the just established non-negativity statements, implies that the right-hand sides are also zeros, which implies independence.
\end{proof}

It is convenient to have probability-based quantities expressed in terms of distribution functions, and we next do so expressly for the purpose of checking whether or not the random variables $h(X)$ and $g(Y)$ are  weakly comonotonic with respect to  $\mathbb{P}\times \mathbb{P}$. To this end, we write the equations
\begin{align}\label{av-comon-4a}
\iint_{\Omega^2}
\big (g(X(\omega))-g(X(\omega'))\big )
& \big (h(Y(\omega))-h(Y(\omega'))\big )
\mathbb{P}(\mathrm{d} \omega) \mathbb{P}(\mathrm{d} \omega')
\notag
\\
&=\mathbb{E}\big [ (g(X)-g(X'))(h(Y)-h(Y')) \big ]
\notag
\\
&= \mathbb{E}\big [ (g(X)-g(X'))(h^*(X)-h^*(X')) \big ]
\notag
\\
&=\iint_{\mathbb{R}^2} \big (g(x)-g(x')\big ) \big (h^*(x)-h^*(x')\big )
F_X(\mathrm{d}x)F_X(\mathrm{d}x'),
\end{align}
where
\[
h^*(x):=\mathbb{E}\big [ h(Y) \mid X=x \big ].
\]
Consequently, $h(X)$ and $g(Y)$ are  weakly comonotonic with respect to  $\mathbb{P}\times \mathbb{P}$ if and only if the functions $g$ and $h^*$ are weakly comonotonic with respect to $F_X\times F_X$, that is,
\begin{equation}\label{av-comon-4}
\iint_{\mathbb{R}^2} \big (g(x)-g(x')\big ) \big (h^*(x)-h^*(x')\big )
F_X(\mathrm{d}x)F_X(\mathrm{d}x')\ge 0 .
\end{equation}
From this we arrive at the following interpretation of positive association in terms of weak comonotonicity.

\begin{proposition}\label{th-2a}
The following two statements are equivalent:
\begin{enumerate}[\rm (1)]
 \item \label{stat-1a}
The random variables $X$ and $Y$ are positively associated.
  \item \label{stat-2a}
For all non-decreasing Borel functions $g$ and $h$, the functions $g$ and $h^*(x):=\mathbb{E} [ h(Y) \mid X=x ]$ are weakly comonotonic with respect to $F_X\times F_X$.
\end{enumerate}
\end{proposition}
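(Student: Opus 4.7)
The proof is essentially a direct assembly of two ingredients already on the page, so my plan is to make clear how each implication is obtained rather than to introduce new machinery. By property~(S\ref{S2}), $X$ and $Y$ are positively associated if and only if, for every pair of non-decreasing Borel functions $g$ and $h$, the random variables $g(X)$ and $h(Y)$ are weakly comonotonic with respect to $\mathbb{P}\times\mathbb{P}$, which by definition means
\[
\iint_{\Omega^2}\big(g(X(\omega))-g(X(\omega'))\big)\big(h(Y(\omega))-h(Y(\omega'))\big)\,\mathbb{P}(\mathrm{d}\omega)\mathbb{P}(\mathrm{d}\omega')\ge 0.
\]
Thus the whole task is to rewrite this double integral in a form that involves $g$ and $h^*$ integrated against $F_X\times F_X$.

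For that rewriting I would invoke precisely the identity already displayed in equation~\eqref{av-comon-4a}: the first equality is the definition of expectation of an i.i.d. copy, the second uses the tower property together with the fact that $X'$ is independent of $Y$ when $(X',Y')$ is an independent copy of $(X,Y)$, allowing us to replace $h(Y)$ by $h^*(X)=\mathbb{E}[h(Y)\mid X]$ inside the expectation of the product $(g(X)-g(X'))(h(Y)-h(Y'))$, and the third is simply transferring the expectation to an integral against $F_X\times F_X$. Granting this identity, the left-hand integral is non-negative if and only if
\[
\iint_{\mathbb{R}^2}\big(g(x)-g(x')\big)\big(h^*(x)-h^*(x')\big)\,F_X(\mathrm{d}x)F_X(\mathrm{d}x')\ge 0,
\]
which is exactly the statement that $g$ and $h^*$ are weakly comonotonic with respect to $F_X\times F_X$.

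Combining the two steps, (\ref{stat-1a}) holds if and only if the displayed inequality above is valid for every pair of non-decreasing Borel $g,h$, which is precisely (\ref{stat-2a}). The only mildly delicate point, and the one I would flag as the main obstacle, is justifying the replacement of $h(Y)$ by $h^*(X)$ inside the double expectation: it requires that when computing $\mathbb{E}[g(X')h(Y)]$ we use independence of $X'$ from $(X,Y)$ to first integrate out $X'$, and when computing $\mathbb{E}[g(X)h(Y)]$ we use the tower property to condition on $X$. Once that accounting is done carefully, the equivalence is immediate and no further work is required.
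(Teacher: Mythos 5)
Your proof is correct and is essentially the paper's own argument: the proposition is obtained by combining property~(S\ref{S2}) with the identity chain in~\eqref{av-comon-4a}, which is exactly what you do, and your careful justification of the middle equality (expanding the product, using independence of the copy to reduce everything to $2\mathbb{E}[g(X)h(Y)]-2\mathbb{E}[g(X)]\mathbb{E}[h(Y)]$, and then conditioning on $X$) is the right accounting. No gaps.
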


From Proposition~\ref{th-2a} we see that if we require the functions $g$ and $h^*$ to be weakly comonotonic with respect to \textit{all} product measures $\varrho_1\times \varrho_2$, and thus in particular with respect to the products $\delta_{x}\times \delta_{x'}$ for all $x,x' \in \mathbb{R}$, then this is tantamount to the functions $g$ and $h^*$ being strongly comonotonic. The next theorem connects the notion of weak comonotonicity of $g$ and $h^*$ with the notion of positive regression dependence \citep{L66}.

\begin{proposition}\label{th-3b}
The following two statements are equivalent:
\begin{enumerate}[(i)]
  \item \label{stat-1}
For all non-decreasing Borel functions $g$ and $h$, the functions $g$ and $h^*$ are weakly comonotonic with respect to all product measures $\varrho_1\times \varrho_2$.
  \item  \label{stat-2}
The random variable $Y$ is positively regression dependent on $X$, that is, for every $y\in \mathbb{R}$, the function $x\mapsto F_{Y\mid X}(y\mid x)$ is non-increasing.
\end{enumerate}
\end{proposition}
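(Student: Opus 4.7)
The plan is to use Theorem~\ref{th-10}(i) to reduce statement~(\ref{stat-1}) to a monotonicity property of $h^*$, and then identify that property with positive regression dependence via the standard characterization of first-order stochastic dominance.

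For the reduction, I observe that since weak comonotonicity with respect to every product measure subsumes weak comonotonicity with respect to $\mathcal R_{\rm c}$, statement~(\ref{stat-1}) implies by Theorem~\ref{th-10}(i) that $g$ and $h^*$ are strongly comonotonic for every non-decreasing Borel $g$ and $h$. Conversely, strong comonotonicity makes the integrand in~(\ref{eq:def1}) pointwise non-negative, so its integral against any product measure is non-negative. Hence statement~(\ref{stat-1}) is equivalent to: for all non-decreasing Borel $g$ and $h$, the pair $(g,h^*)$ is strongly comonotonic. Specializing $g$ to the identity $g(x)=x$, which is strictly increasing, forces $h^*$ itself to be non-decreasing; conversely, if $h^*$ is non-decreasing, then any non-decreasing $g$ is automatically strongly comonotonic with $h^*$. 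Thus statement~(\ref{stat-1}) is equivalent to the condition that $x\mapsto \mathbb{E}[h(Y)\mid X=x]$ is non-decreasing for every non-decreasing Borel $h$.

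It remains to match this reformulated condition with statement~(\ref{stat-2}). For (\ref{stat-1})$\Rightarrow$(\ref{stat-2}), take $h=\mathbb{I}_{(y,\infty)}$, which is a non-decreasing Borel function; then $h^*(x)=1-F_{Y\mid X}(y\mid x)$, so its non-decreasingness is precisely $x\mapsto F_{Y\mid X}(y\mid x)$ being non-increasing. For (\ref{stat-2})$\Rightarrow$(\ref{stat-1}), this is the standard characterization of first-order stochastic dominance: if $F_{Y\mid X=x_2}$ stochastically dominates $F_{Y\mid X=x_1}$ whenever $x_1\le x_2$, then $\mathbb{E}[h(Y)\mid X=x_2]\ge \mathbb{E}[h(Y)\mid X=x_1]$ for every non-decreasing bounded Borel $h$; this is verified first on indicators $\mathbb{I}_{(y,\infty)}$ (which is exactly the hypothesis), then on non-negative simple non-decreasing functions by linearity, and finally on general non-decreasing $h$ by monotone convergence and splitting into positive and negative parts.

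The main obstacle is the mild measurability/version issue: one needs to work with a regular version of the conditional cdf $F_{Y\mid X}(y\mid \cdot)$ so that monotonicity in $x$ can be stated pointwise on a set of full $\mathbb{P}_X$-measure, consistent with the paper's convention (Section~\ref{sect-wcrv}) of identifying $\mathbb{P}$-a.s.\ equal random variables. Once this is fixed, both implications follow from the two routine steps above.
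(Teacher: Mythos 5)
Your proposal is correct and follows essentially the same route as the paper: reduce statement~(i) to strong comonotonicity of $g$ and $h^*$ (hence to $h^*$ being non-decreasing for every non-decreasing $h$), and then invoke the equivalence between the class of all non-decreasing test functions and the class of indicators $\mathbb{I}_{(y,\infty)}$ in defining first-order stochastic dominance. You merely spell out in more detail the approximation argument and the version issue that the paper treats as standard.
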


\begin{proof}
Statement~(\ref{stat-1}) means that $g$ and $h^*$ are strongly comonotonic for all non-decreasing Borel functions $g$ and $h$. With this in mind, the equivalence of statements~(\ref{stat-1}) and (\ref{stat-2}) follows by noting that $h^*(x)$ and $1-F_{Y\mid X}(y\mid x)$ are equal to $\mathbb{E}[h(Z_x)]$ and $\mathbb{E}[h_y(Z_x)]$, respectively, where $Z_x:=[Y\mid X=x]$ and $h_y=\mathbb{I}_{(y,\infty)}$. It now remains to recall that the class of all non-decreasing functions $h$ and the class $\{h_y$, $y\in \mathbb{R}\}$ give rise to two equivalent ways for defining stochastic ordering \citep[e.g.,][]{PR07, R13, FS16}.
\end{proof}

%
%
%

\section{Maximality of product measures}
\label{sect-5}

Definition~\ref{new-10} is based on the product measure $\pi_1 \times \pi_2$, which is a natural choice in view of the examples that have given rise to the notion of weak comonotonicity. There are, however, situations when the need for more generality arises, and for this we introduce an extension of integral~\eqref{eq:def2}:
\begin{equation}\label{corr-probab-1}
\iint _{\Omega^2}  (X(\omega)-X(\omega') )(Y(\omega)-Y(\omega')) \pi_{W}(\d \omega,\d \omega'),
\end{equation}
where, $\pi$ is a measure on $(\Omega,\mathcal F)$, and for any random variable $W$ on $(\Omega^2 ,\mathcal F^2)$,
$$
\pi_{W}(\d \omega,\d \omega')={W(\omega,\omega')\over \mathbb{E}^{\pi\times \pi}[W]}\pi(\d \omega)\pi(\d \omega').
$$

\begin{definition}\label{new-11}
We say that random variables $X$ and $Y$  are  weakly comonotonic with respect to a set $\mathcal P$ of (not necessarily product) measures $\pi $ on $(\Omega^2 ,\mathcal F^2)$ whenever
\[
 \iint _{\Omega^2}  (X(\omega)-X(\omega') )(Y(\omega)-Y(\omega')) \pi(\mathrm{d} \omega,\mathrm{d} \omega') \ge 0
\]
for all $\pi\in \mathcal P$.
\end{definition}

This generalization provides a context within which we can better understand the role of the product measure $\pi_1 \times \pi_2$, which happens to enjoy the following maximality property:
\begin{multline}
\iint _{\Omega^2}  (X(\omega)-X(\omega') )(Y(\omega)-Y(\omega')) \pi(\mathrm{d} \omega,\mathrm{d} \omega')
\\
\le \iint _{\Omega^2}  (X(\omega)-X(\omega') )(Y(\omega)-Y(\omega')) \pi_1(\mathrm{d} \omega) \pi_2 (\mathrm{d} \omega')  ,
\label{qqq-1}
\end{multline}
provided that
\begin{multline}
\mathbb{C}^{\pi}(X,Y):={1\over 2}\left\{\iint _{\Omega^2}  X(\omega)Y(\omega') \pi(\mathrm{d} \omega,\mathrm{d} \omega')
-\int _{\Omega}  X(\omega) \pi_1(\mathrm{d} \omega)
\int _{\Omega}  Y(\omega')\pi_2 (\mathrm{d} \omega') \right\}
\\
+{1\over 2}\left\{\iint _{\Omega^2}  Y(\omega)X(\omega') \pi(\mathrm{d} \omega,\mathrm{d} \omega')
-\int _{\Omega}  Y(\omega) \pi_1(\mathrm{d} \omega)
\int _{\Omega}  X(\omega')\pi_2 (\mathrm{d} \omega')\right\} \ge 0,
\label{qqq-1star}
\end{multline}
where $\pi_1(A):=\int_{\Omega} \pi(A,\mathrm{d} \omega')$ and
$\pi_2(A'):=\int_{\Omega} \pi(\mathrm{d} \omega,A')$. If the measure $\pi $ is symmetric, that is, $\pi(A,A')=\pi(A',A)$ for all $A,A'\in \mathcal{F}$, then $\pi_1=\pi_2$. Note also that the covariance-looking quantities inside the first braces and inside the second braces are \textit{not}, in general, symmetric with respect to $X$ and $Y$, but their sum $\mathbb{C}^{\pi}(X,Y)$ is always symmetric, irrespective of the measure $\pi $. Finally, we note that in the ``diagonal'' case $X=Y$, we have
\begin{equation*}
\mathbb{C}^{\pi}(X,X)=\iint _{\Omega^2}  X(\omega)X(\omega') \pi(\mathrm{d} \omega,\mathrm{d} \omega')
-\int _{\Omega}  X(\omega) \pi_1(\mathrm{d} \omega)
\int _{\Omega}  X(\omega')\pi_2 (\mathrm{d} \omega') .
\end{equation*}

To get a deeper insight into the above notion, and to also connect it to weak  comonotonicity and positive association, we shift our focus to 1) the measurable space $(\mathbb{R}^2,\mathcal{B}^2)$, 2)  Borel functions $g$ and $h$, and 3) the joint cdf $F_{V,W}$ generated by two random variables $V$ and $W$, whose marginal cdf's we denote by $F_{V}$ and $F_{W}$, respectively. Under this scenario, bound~(\ref{qqq-1}) takes on the following form
\begin{multline}\label{eqq-0}
\iint_{\mathbb{R}^2} \big (g(v)-g(w)\big ) \big (h(v)-h(w)\big )F_{V,W}(\mathrm{d}v,\mathrm{d}w)
\\
\le \iint_{\mathbb{R}^2} \big (g(v)-g(w)\big ) \big (h(v)-h(w)\big )F_V(\mathrm{d}v)F_W(\mathrm{d}w),
\end{multline}
which holds (cf.~condition~(\ref{qqq-1star})) if and only if
\begin{equation}\label{eqq-2}
\mathbb{C}^{\pi}(g,h):={1\over 2}\mathrm{Cov}[g(V),h(W)]+{1\over 2}\mathrm{Cov}[h(V),g(W)]\ge 0,
\end{equation}
where $\pi=F_{V,W}$. Obviously, $\mathbb{C}^{\pi}(g,h)=\mathbb{C}^{\pi}(h,g)$ irrespective of the measure $\pi$, and we also have the equation $\mathbb{C}^{\pi}(g,g)=\mathrm{Cov}[g(V),g(W)]$.

From the above notes we conclude that within the class of measures $\pi=F_{V,W}$ generated by positively-associated random variables $V$ and $W$, the product measure $\pi_0=F_V\times F_V$ is maximal in the sense of bound~(\ref{eqq-0}) within the class of all pairs of non-decreasing Borel functions $g$ and $h$. But the assumptions that 1) $V$ and $W$ are positively associated and 2) $g$ and $h$ are non-decreasing are rather strong: they ensure non-negativity of the two covariances on the right-hand side of equation~(\ref{eqq-2}) and thus, in turn, imply the required non-negativity of $\mathbb{C}^{\pi}(g,h)$.

Due to the notion of weak comonotonicity, we can specify necessary and sufficient conditions for non-negativity of the two covariances on the right-hand side of equation~(\ref{eqq-2}). For this, we write
\begin{equation}\label{eqq-4}
\mathbb{C}^{\pi}(g,h)={1\over 2}\mathrm{Cov}[g(V),h^*(V)]+{1\over 2}\mathrm{Cov}[g^*(V),h(V)],
\end{equation}
where $h^*(v)=\mathbb{E}[ h(W) \mid V=v ]$ and $g^*(v)=\mathbb{E}[ g(W) \mid V=v ]$. The two covariances on the right-hand side of equation~(\ref{eqq-4}) are non-negative if and only if the two pairs $(g,h^*)$ and $(g^*,h)$ are weakly comonotonic with respect to the measure $\pi_0=F_V\times F_V$.

Note, however, that the covariance $\mathbb{C}^{\pi}(g,h)$ can be non-negative without making the two covariances on the right-hand side of equation~(\ref{eqq-4}) non-negative. To show this, we next construct an example when one of the two covariances is negative but $\mathbb{C}^{\pi}(g,h)$ is positive.

\begin{example} \label{example-51}
Let $g(x)=\sin(x)$ and $h(x)=\cos(x)$. Furthermore, let $V$ and $W$ be random variables whose marginal distributions are
\[
V=
\left\{
  \begin{array}{ll}
    0     & \hbox{with }~ 3/10\\
    \pi/2 & \hbox{with }~ 7/10
  \end{array}
\right.
\]
and
\[
W=
\left\{
  \begin{array}{ll}
    2\pi/3     & \hbox{with }~ 3/10\\
    \pi & \hbox{with }~ 7/10
  \end{array}
\right.
\]
and let the dependence structure be given by the matrix
\[
\bordermatrix{%
      & 2\pi/3 & \pi \cr
    0 & 1/10 & 2/10 \cr
\pi/2 & 2/10 & 5/10 \cr
}
\]
with Archimedes' constant $\pi\approx 3.14159$ not be confused with the earlier used notation for measures. We have
\begin{gather*}
\mathrm{Cov}[g(V),h(W)]=-{1\over 200}=-0.005,
\\
\mathrm{Cov}[h(V),g(W)]={\sqrt{3}\over 200}\approx 0.00866,
\end{gather*}
and thus
\[
\mathbb{C}^{\pi}(g,h)={1\over 2}\bigg( -{1\over 200}+{\sqrt{3}\over 200}\bigg)\approx 0.00183.
\]
This concludes Example~\ref{example-51}.
\end{example}

\section{An application to quantile-based risk sharing}
\label{application}

In this section, we illustrate the above developed theory by studying an optimization problem arising in the context of risk sharing, where weak comonotonicity provides a natural constraint on the dependence structure of  admissible risk allocations.
We follow the framework of
\cite{ELW18, ELMW19}, who studied risk sharing problems with quantile-based risk measures.


Let $\X$ be the set of all random variables in an atomless probability space.
 The random variable $X\in \X$ represents a total random loss, and $\rho_1,\dots,\rho_n$ are risk measures (e.g., VaR or ES) used by $n$ economic agents (e.g., firms or investors).
Denote
\begin{align}
\mathbb{A}_n(X)=\left\{(X_1,\ldots,X_n)\in \mathcal{X}^n: \sum_{i=1}^nX_i\ge X\right\} , \label{eq:rs1}
\end{align}
which is the set of all possible allocations of losses to the agents, summing up to at least the total loss $X$.
By \citet[Proposition 1]{ELW18}, Pareto-optimal allocations for the risk sharing problem are solutions to the following optimization problem
\begin{align}
\min\left\{\sum_{i=1}^n\rho_i(X_i): (X_1,\dots,X_n)\in \mathbb{A}_n(X) \right\}.  \label{eq:rs2}
\end{align}
In   problem~\eqref{eq:rs2},
the dependence structure among the allocation $(X_1,\dots,X_n)$  is arbitrary.
 \cite{ELW18} also consider the constrained problem
\begin{align}
\min\left\{\sum_{i=1}^n\rho_i(X_i): (X_1,\dots,X_n)\in \mathbb{A}_n(X),~X_i\uparrow X,~ i=1,\ldots,n  \right\}. \label{eq:rs3}
\end{align}
where $X_i\uparrow X$ means that $X_i$ and $X$ are strongly comonotonic.

For a practical situation, the assumption of arbitrary dependence in the admissible allocations as in problem~\eqref{eq:rs2} may be too weak,
and the assumption of strongly comonotonic  allocations  in problem~\eqref{eq:rs3} may be too strong.
Therefore, we can consider an intermediate assumption on the dependence structure of the admissible allocations in the risk sharing problem,
which is modelled by weak comonotonicity.

To this end, we construct a spectrum of weak comonotonicity indexed by $\beta \in [0,1]$, such that $\beta =0$ corresponds to no dependence constraint and $\beta=1$ corresponds to strong comonotonicity.
For this purpose,
recall that in Section~\ref{section-aggregation} above, for a random variable $X$ and for any $p\in [0,1)$, we defined
 $$
A_p^X=\{\omega\in \Omega: X(\omega)>\VaR_p(X)\}
$$
and
 $$
\mathcal P^X_p= \{\delta_\omega\times \delta_{\omega'}: \omega \in A^X_p, ~\omega' \in (A^X_p)^c\}.
 $$
In what follows, for two random variables $Y$ and $Z$,
we shall use the notation $Y\uparrow_\beta Z$ when $Y$ and $Z$ are weakly comonotonic with respect to $ \bigcup_{p\in [ 1-\beta,1)} \mathcal P^Z_p$.

The interpretation of $Y\uparrow_\beta Z$ is that $Y$ and $Z$ are comonotonic and both take large values on the event $A_{1-\beta}^Z$, and there is no dependence assumption on $(A_{1-\beta}^Z)^{c}$. Note also that the requirement  $Y\uparrow_\beta Z$  gets stronger when $\beta$ increases. In particular,
assuming that $Z$ is continuously distributed,
  for $\beta=0$, $Y\uparrow_\beta Z$
imposes no dependence assumption, and for $\beta=1$, it means that $Y$ and $Z$ are strongly comonotonic.
Using this connection, we will impose   $X_i\uparrow_\beta X$, $i=1,\dots,n$ as a constraint on the admissible allocations in our risk sharing problem,
so that $\beta=0$ corresponds to \eqref{eq:rs2} and $\beta=1$ corresponds to \eqref{eq:rs3}.

For the purpose of illustration, we focus on an important special case studied by \cite{ELW18}, when the risk measures $\rho_1,\dots,\rho_n$  are quantiles at different levels.
Following the setup of \cite{ELW18}, for $\alpha \in (0,1)$ and $Y\in \X$, we define
$$
\VaRl_\alpha(Y) = \inf \{x\in \R: \p(Y\le x) \ge 1-\alpha \}.
$$
\begin{remark}
Note that   $\VaRl_\alpha$ is the left $(1-\alpha)$-quantile, which is different from the VaR (right quantile) defined in Section~\ref{section-aggregation}.
The choice of the left quantile here and in \cite{ELW18,ELMW19}  is intentional.
For minimization problems, we need to work with left quantiles to guarantee the existence of optimal allocations. Recall that in Section~\ref{section-aggregation} we study maximization problems, and hence   right quantiles are natural choices there.
On the other hand, using $(1-\alpha)$-quantile instead of $\alpha$-quantile  leads to concise statements of the results; this will be clear from statements~\eqref{eq:elw1}--\eqref{eq:elw2} below.
\end{remark}

Let $\rho_i=\VaRl_{\alpha_i}$, $i=1,\dots,n$, where $\alpha_1,\dots,\alpha_n$ are positive constants such that $ \sum_{i=1}^n \alpha<1$.
For this choice of risk measures,  both problems~\eqref{eq:rs2} and~\eqref{eq:rs3} admit analytical solutions, given in Theorem 2 and Proposition 5 of \cite{ELW18}, respectively. These results imply
\begin{align}
\min\left\{\sum_{i=1}^n\VaRl_{ \alpha_i} (X_i): (X_1,\dots,X_n)\in \mathbb{A}_n(X) \right\} = \VaRl_{ \sum_{i=1}^n \alpha_i}(X)
\label{eq:elw1}
\end{align}
and
\begin{align}
\min\left\{\sum_{i=1}^n\VaRl_{ \alpha_i} (X_i): (X_1,\dots,X_n)\in \mathbb{A}_n(X), ~X_i\uparrow X,~ i=1,\ldots,n  \right\} = \VaRl_{ \bigvee_{i=1}^n \alpha_i}(X),
\label{eq:elw2}
\end{align}
and the corresponding optimal allocations can be explicitly constructed as well.
Note that result~\eqref{eq:elw1} implies
\begin{align}
\VaRl_{ \sum_{i=1}^n \alpha_i} \left(\sum_{i=1}^nX_i\right) \le
 \sum_{i=1}^n\VaRl_{ \alpha_i} (X_i) \label{eq:elw3}\end{align}
 for all $X_1,\dots,X_n \in \X$ \citep[Corollary 1]{ELW18}, which will be useful in our analysis below.

\begin{remark}
 \cite{ELW18} formulate the admissible allocations in ~\eqref{eq:rs1} using  $\sum_{i=1}^nX_i= X$ instead of
$\sum_{i=1}^nX_i\ge X$. It is easy to see that in problems~\eqref{eq:rs2} and \eqref{eq:rs3}, these two setups are equivalent for monotone risk measures such as the quantiles.
In this paper, we use inequality in definition~\eqref{eq:rs1} because our dependence constraint would make the two formulations generally no longer equivalent, and analytical solutions are found for the current formulation.
\end{remark}


For a continuously distributed $X$ and a parameter $\beta \in [0,1]$, we consider the optimization problem
\begin{align}
V_\beta (X)=\inf\left\{\sum_{i=1}^n\VaRl_{ \alpha_i}(X_i): (X_1,\dots,X_n)\in \mathbb{A}_n(X),~X_i\uparrow_\beta X,~ i=1,\ldots,n  \right\}. \label{eq:rs4}
\end{align}
It is clear that  $\beta=0$ corresponds to problem~\eqref{eq:rs2} and $\beta=1$ corresponds to problem~\eqref{eq:rs3}.
Therefore, the use of weak comonotonicity   yields   a bridge between the two risk sharing problems~\eqref{eq:rs2} and~\eqref{eq:rs3} considered by \cite{ELW18},
and it offers more flexibility as one can impose a partial dependence constraint on the admissible allocations.

Similarly to many other optimization problems involving quantiles (or VaR), problem~\eqref{eq:rs4} is not convex   as $\VaRl_\alpha$  is generally not convex, and thus a specialized analysis of the problem is needed. Nevertheless, via some auxiliary  technical results, we will show below that problem~\eqref{eq:rs4}   admits an analytical solution, and an optimal allocation will be obtained in explicit form.

 \begin{theorem}\label{th:rs}
Suppose that  $X$ is a continuously distributed random variable,  $\alpha_1,\dots,\alpha_n>0$, $\sum_{i=1}^n \alpha_i<1$, and $\beta \in [0,1]$.
We have
 \begin{align*}
V_\beta (X)= \VaRl_{ \gamma}(X),
\end{align*}
where $\gamma =  \beta \wedge (\bigvee_{i=1}^n  \alpha_i)+ \sum_{i=1}^n (\alpha_i-\beta)_+. $
 %
 \end{theorem}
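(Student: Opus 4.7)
The plan is to prove $V_\beta(X)=\VaRl_\gamma(X)$ via matching upper and lower bounds, splitting into the two regimes $\beta \ge \bar\alpha$ and $\beta < \bar\alpha$, where $\bar\alpha = \bigvee_i \alpha_i$. Write $A = A^X_{1-\beta} = \{X > \VaRl_\beta(X)\}$, $I = \{i : \alpha_i > \beta\}$, and $J = \{i : \alpha_i \le \beta\}$. The first step is to unpack the constraint $X_i \uparrow_\beta X$: applying Lemma~\ref{lem:var} at every $p \in [1-\beta,1)$ shows that it is equivalent to (C1) $X_i|_A$ being a non-decreasing function of $X|_A$, together with (C2) $\mathrm{ess\,inf}^{\p_A}(X_i|_A) \ge \mathrm{ess\,sup}^{\p_{A^c}}(X_i|_{A^c})$. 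A split of $\p(X_i \le t)$ according to whether $t$ lies above or below this essential-inf threshold then produces the decomposition
\[
\VaRl_{\alpha_i}(X_i) = \begin{cases} \VaRl_{\alpha_i/\beta}^{\p_A}(X_i|_A), & \alpha_i < \beta, \\ \VaRl_{(\alpha_i-\beta)/(1-\beta)}^{\p_{A^c}}(X_i|_{A^c}), & \alpha_i > \beta, \end{cases}
\]
with the boundary value $\VaRl_\beta(X_i)=\mathrm{ess\,sup}^{\p_{A^c}}(X_i|_{A^c})$, together with the identity $\VaRl_\gamma(X)=\VaRl_\eta^{\p_{A^c}}(X|_{A^c})$ in the regime $\beta<\bar\alpha$, where $\eta=\sum_{i\in I}(\alpha_i-\beta)/(1-\beta)$; this identity follows from $1-\gamma=(1-\beta)(1-\eta)$.

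For the lower bound I first treat $\beta \ge \bar\alpha$, in which case every $\alpha_i \le \beta$ and the $X_i|_A$'s are pairwise strongly comonotonic under $\p_A$ by (C1). Combining $\VaRl_{\alpha_i/\beta}^{\p_A} \ge \VaRl_{\bar\alpha/\beta}^{\p_A}$ (monotonicity of $\VaRl$ in its level), the comonotonic additivity of $\VaRl$, and $\sum_i X_i|_A \ge X|_A$ yields $\sum_i \VaRl_{\alpha_i}(X_i) \ge \VaRl_{\bar\alpha/\beta}^{\p_A}(X|_A) = \VaRl_{\bar\alpha}(X)=\VaRl_\gamma(X)$. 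When $\beta<\bar\alpha$, set $t_j:=\mathrm{ess\,inf}^{\p_A}(X_j|_A)$ for $j\in J$; each $\VaRl_{\alpha_j}(X_j)$ is bounded below by $t_j$ (or by $\mathrm{ess\,sup}^{\p_{A^c}}(X_j|_{A^c})\le t_j$ at the edge $\alpha_j=\beta$). The ELW inequality~\eqref{eq:elw3} applied on the atomless conditional space $(A^c,\p_{A^c})$ to the $I$-agents gives
\[
\sum_{i\in I}\VaRl_{(\alpha_i-\beta)/(1-\beta)}^{\p_{A^c}}(X_i|_{A^c}) \ge \VaRl_\eta^{\p_{A^c}}\!\!\left(\sum_{i\in I}X_i|_{A^c}\right).
\]
Since $\sum_k X_k|_{A^c}\ge X|_{A^c}$ while $X_j|_{A^c}\le t_j$ by (C2), the right-hand side is at least $\VaRl_\eta^{\p_{A^c}}(X|_{A^c})-\sum_J t_j$ by translation-equivariance of $\VaRl$; adding the $J$-contributions cancels the $t_j$ terms and yields $\sum_i \VaRl_{\alpha_i}(X_i) \ge \VaRl_\eta^{\p_{A^c}}(X|_{A^c})=\VaRl_\gamma(X)$.

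For the upper bound, in the regime $\beta\ge\bar\alpha$ I invoke the strongly comonotonic optimal allocation furnished by~\eqref{eq:elw2}, which automatically satisfies $X_i\uparrow_\beta X$ and attains $\VaRl_{\bar\alpha}(X)=\VaRl_\gamma(X)$. In the regime $\beta<\bar\alpha$, using translation-equivariance of both $V_\beta$ and $\VaRl_\gamma$ I shift $X$ so that $\VaRl_\beta(X)=0$; then I take $X_j\equiv 0$ for $j\in J$, set $X_i|_A=X|_A/|I|$ for $i\in I$ (comonotonic with $X|_A$, with essential infimum $0$), and on $A^c$ distribute $X|_{A^c}$ among the $I$-agents via the optimal allocation from~\eqref{eq:elw1} on the atomless space $(A^c,\p_{A^c})$ at levels $\alpha'_i=(\alpha_i-\beta)/(1-\beta)$. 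Since $X|_{A^c}\le 0$, each $X_i|_{A^c}\le 0=\mathrm{ess\,inf}^{\p_A}(X_i|_A)$, verifying (C2); and~\eqref{eq:elw1} delivers total cost $\VaRl_\eta^{\p_{A^c}}(X|_{A^c})=\VaRl_\gamma(X)$. The hardest step is the lower bound when $\beta<\bar\alpha$: justifying the conditional quantile decomposition with attention to the boundary $\alpha_i=\beta$, and then orchestrating the cancellation of the auxiliary $t_j$ terms using the conditional ELW inequality, (C2), and translation-equivariance of $\VaRl$, is what sharpens a naively weaker bound into the exact value $\VaRl_\gamma(X)$.
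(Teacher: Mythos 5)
Your architecture (matched upper/lower bounds, the split at $\beta$ versus $\bar\alpha:=\bigvee_i\alpha_i$, and the reading of $X_i\uparrow_\beta X$ as ``$X_i|_A$ comonotonic with $X|_A$'' plus the separation condition (C2)) parallels the paper's, but you replace the paper's Lemma~\ref{lem:rs1}(ii)--(v) on quantiles of mixtures $z\id_B+Y\id_{B^c}$ by conditional quantiles on $A$ and $A^c$. Your lower bound is essentially correct and is arguably cleaner than the paper's chain \eqref{eq:rspf-2}--\eqref{eq:rspf-3}, with two boundary caveats. At $\alpha_j=\beta$ you only obtain $\VaRl_{\alpha_j}(X_j)=\mathrm{ess\,sup}^{\p_{A^c}}(X_j|_{A^c})\le t_j$, so the advertised cancellation of $\sum_J t_j$ breaks as written; the fix is to set $t_j:=\VaRl_\beta(X_j)=\mathrm{ess\,sup}^{\p_{A^c}}(X_j|_{A^c})$ throughout, since then both $\VaRl_{\alpha_j}(X_j)\ge t_j$ (monotonicity in the level) and $X_j|_{A^c}\le t_j$ hold. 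Likewise, in the regime $\beta\ge\bar\alpha$ your decomposition degenerates when some $\alpha_i=\beta$ (the level $\alpha_i/\beta=1$ gives $\VaRl_1^{\p_A}=-\infty$), and those indices need the same replacement.

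The genuine gap is in the upper bound for $\beta<\bar\alpha$ with $|I|\ge2$: the step ``since $X|_{A^c}\le0$, each $X_i|_{A^c}\le0$'' is false, and in fact no optimal allocation on $A^c$ can satisfy it, so your allocation violates (C2) and is inadmissible. Concretely, let $X|_{A^c}$ be uniform on $[-1,0]$ and $I=\{1,2\}$ with $\alpha_1'=\alpha_2'=1/4$, so $c'=\VaRl^{\p_{A^c}}_{1/2}(X|_{A^c})=-1/2$. If $X_1,X_2\le0$ and $X_1+X_2\ge X$, then $X_i\ge X$ for each $i$, whence $\VaRl_{1/4}(X_i)\ge-1/4$; optimality forces $\VaRl_{1/4}(X_i)=-1/4$, i.e.\ $\p(X_i>-1/4)\le1/4$, while $\{X>-1/4\}\subseteq\{X_i>-1/4\}$ and $\p(X>-1/4)=1/4$ force $\{X_i\le-1/4\}=\{X\le-1/4\}$ a.s.; then on $\{-1/2<X\le-1/4\}$ one gets $X_1+X_2\le-1/2<X$, a contradiction. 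The repair is what the paper's construction \eqref{eq:opt} does with its $y_+\id_B$ terms: once (C2) holds, the cost $\VaRl_{\alpha_i}(X_i)$ for $i\in I$ depends only on $X_i|_{A^c}$, so you may add a large constant to $X_i$ on $A$ (preserving (C1) and $\sum_iX_i\ge X$) until $\mathrm{ess\,inf}^{\p_A}(X_i|_A)\ge\mathrm{ess\,sup}^{\p_{A^c}}(X_i|_{A^c})$. With that amendment, and the $t_j$ fix above, your proof goes through.
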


\begin{proof}
We first note that $\gamma = \sum_{i=1}^n \alpha_i$  if $\beta=0$,
and   $\gamma= \bigvee_{i=1}^n  \alpha_i$  if $\beta=1$, corresponding to statements~\eqref{eq:elw1} and \eqref{eq:elw2}, respectively.
Thus, it suffices to consider $\beta\in (0,1)$.
To proceed, we need the following lemma, whose proof will be given in  the appendix.

\begin{lemma}\label{lem:rs1}
Let $\beta \in (0,1)$ and $Y\uparrow_\beta X$. Denote  
 $B=A^X_{1-\beta}$.
We have the statements:
\begin{enumerate}[\rm (i)]
\item  $B\subset \{Y \ge  \VaRl_{ \beta}(Y)\}$ and $B^c \subset  \{Y \le  \VaRl_{ \beta}(Y)\}$ a.s.
\item If $\alpha> \beta$,
then $\VaRl_{ \alpha}(Y) = \VaRl_{ \alpha-\beta} ( z\id_B + Y\id_{B^c})$ for all $z\le \VaRl_{ \alpha }(Y) $.
\item If $\alpha> \beta$,
then $\VaRl_{ \alpha}(Y) = \VaRl_{ \alpha} ( z\id_B + Y\id_{B^c})$ for   all $z\ge \VaRl_\alpha(Y)$.
\item If  $\alpha\le \beta$,
then $\VaRl_{ \alpha}(Y) = \VaRl_{ \alpha} ( Y\id_B + z\id_{B^c})$ for all $z \le \VaRl_{ \alpha}(Y)$.
\item If $\alpha+\beta<1$, then $\VaRl_{ \alpha}(Z) \ge \VaRl_{ \alpha+\beta} (z\id_B+ Z\id_{B^c}) $ for all  $Z\in \X$ and $z\in \R$.
\end{enumerate}
\end{lemma}

We can now continue the proof of Theorem~\ref{th:rs}.
Let $\beta\in (0,1)$ and take an arbitrary admissible allocation $(X_1,\dots,X_n)\in \mathbb A_n(X)$ such that $X_i\uparrow_\beta X$, $i=1,\dots,n$.
We need additional notation: $B=A^X_{1-\beta}$, $J=\{i\in \{1,\dots,n\}: \alpha_i> \beta\}$, and $K=\{1,\dots,n\}\setminus J$. 
 Moreover,  let $x_i=\VaRl_{ \alpha_i} (X_i)$, $y_i=\VaRl_{ \beta}(X_i)$, $i=1,\dots,n$,
  $y_J=\sum_{i\in J}y_i$,   $y_K=\sum_{i\in K}y_i$,
  $X_J=\sum_{i\in J}X_i$,   and $X_K=\sum_{i\in K}X_i$.

By Lemma~\ref{lem:rs1}(i), we have  (all statements are in the sense of a.s.)
\begin{equation}\label{eq:rspf-5}
 B\subset \{X_i \ge y_i\}\mbox{~~and~~} B^c \subset  \{X_i \le y_i\} \mbox{~~  for each $i=1,\dots,n$}.
 \end{equation}
Using statements~\eqref{eq:rspf-5}, we see that
 the random vector $(X_i \id_B + y_i\id_{B^c} )_{i\in K}$ is strongly comonotonic, because $(X_1,\dots,X_n)$ is strongly comonotonic on the event $B$ by assumption.
 Hence, using $y_i\le x_i$ for $i\in K$, Lemma~\ref{lem:rs1}(iv) and statement~\eqref{eq:elw2}, we get
\begin{align}
\sum_{i\in K} \VaRl_{ \alpha_i}  (X_i) & = \sum_{i\in K} \VaRl_{ \alpha_i}  (X_i \id_B + y_i\id_{B^c} )\notag
\\
& \ge  \VaRl_{ \bigvee_{i\in K} \alpha_i}  \left(\sum_{i\in K}X_i \id_B + \sum_{i\in K}y_i \id_{B^c} \right)\notag
\\
& =  \VaRl_{ \bigvee_{i\in K} \alpha_i}  \left(X_K \id_B + y_K \id_{B^c} \right). \label{eq:rspf-2}
\end{align}
Further,   statements~\eqref{eq:rspf-5} also imply
\begin{equation}\label{eq:rspf-4}
B \subset \{ X_K  \ge y_K\} ,~~B \subset \{ X_J  \ge y_J\},~~B^c \subset \{X_K \le y_K\} \mbox{~~and~~} B^c \subset \{X_J \le y_J\} .
\end{equation}
We split the following considerations into two cases.

\paragraph{\it Case 1.}
    Assume $\beta\ge \bigvee_{i=1}^n  \alpha_i$, which means $K=\{1,\dots,n\}$ and $\gamma=\bigvee_{i=1}^n \alpha_i$.
Note that statements~\eqref{eq:rspf-4}  imply   $X_K \id_B + y_K \id_{B^c} \ge X_K\ge X$.
Using bound~\eqref{eq:rspf-2} and the fact that $\VaRl_\gamma(Y)$ is increasing in $Y$, we have
 $$  \sum_{i=1}^n   \VaRl_{ \alpha_i}  (X_i)\ge     \VaRl_{ \bigvee_{i=1}^n \alpha_i} \left(X_K \id_B + y_K \id_{B^c} \right) \ge \VaRl_{ \gamma}  \left( X_K\right)\ge \VaRl_{  \gamma }  (X).$$
 Therefore, $V_\beta(X) \ge \VaRl_{  \gamma}  (X)$.
  On the other hand,
by statement~\eqref{eq:elw2}, we have
 $$V_\beta (X)\le \VaRl_{ \bigvee_{i=1}^n  \alpha_i}(X) = \VaRl_{  \gamma }  (X)  .$$
 Putting the above   observations together, we get   $V_\beta(X)   =   \VaRl_{  \gamma}  (X)$.

\paragraph{\it Case 2.}
 Assume $\beta< \bigvee_{i=1}^n  \alpha_i$, which  means $\gamma =\beta+ \sum_{i=1}^n (\alpha_i-\beta)_+ >\beta$, and $J\ne \varnothing$.
 Using Lemma~\ref{lem:rs1}(ii) and (v), and bound~\eqref{eq:elw3},
  we get
\begin{align}
\sum_{i\in J} \VaRl_{ \alpha_i}  (X_i) & = \sum_{i\in J} \VaRl_{ \alpha_i-\beta}  (x_i \id_B + X_i\id_{B^c} ) \notag
\\
& \ge  \VaRl_{ \sum_{i\in J}(\alpha_i-\beta)}  \left(\sum_{i\in J}x_i \id_B + \sum_{i\in J}X_i\id_{B^c} \right) \notag \\
&\ge   \VaRl_{ \beta+\sum_{i\in J}(\alpha_i-\beta)}  \left(y_J \id_B + X_J\id_{B^c} \right)\notag \\
&=  \VaRl_{ \gamma}  \left(y_J \id_B + X_J\id_{B^c} \right). \label{eq:rspf-1}
\end{align}
Therefore,
$y_J\id_B + X_J\id_{B^c} $ and $X_K \id_B + y_K\id_{B^c} $ are strongly comonotonic.
Putting   inequalities~\eqref{eq:rspf-2} and \eqref{eq:rspf-1} together, and  using statements~\eqref{eq:elw2} and \eqref{eq:rspf-4}, we obtain
\begin{align}
\sum_{i=1}^n  \VaRl_{ \alpha_i}  (X_i) & = \sum_{i\in J} \VaRl_{ \alpha_i}  (X_i) + \sum_{i\in K} \VaRl_{ \alpha_i}  (X_i)  \notag
\\
& \ge \VaRl_{ \gamma}  \left(y_J \id_B + X_J \id_{B^c} \right) + \VaRl_{ \gamma} \left(X_K \id_B + y_K\id_{B^c} \right) \notag
\\
& \ge  \VaRl_{ \gamma }  \left((X_K +y_J )\id_B + (X_J+y_K) \id_{B^c}    \right)\notag\\
& \ge  \VaRl_{ \gamma }  \left((y_K +y_J )\id_B + (X_J+X_K) \id_{B^c}    \right)\notag \\
& \ge  \VaRl_{ \gamma }  \left((y_K +y_J )\id_B + X \id_{B^c}    \right).
\label{eq:rspf-3}
\end{align}
Note that $X$ is continuously distributed, implying $\VaRl_{\gamma}(X) < \VaRl_\beta(X)$.
Moreover,  $y_K +y_J \ge X_J+X_K\ge X$ on $B^c$, and
by Lemma~\ref{lem:rs1}(i),  we have
$$\{X\le \VaRl_\gamma(X)\}  \subset \{X<\VaRl_\beta(X)\} \subset B^c\subset \{X \le y_K +y_J  \}. $$
This shows
$y_K+y_J \ge  \VaRl_\gamma (X) $.
Using Lemma~\ref{lem:rs1}(iii) and bounds~\eqref{eq:rspf-3}, we obtain
$$
\sum_{i=1}^n  \VaRl_{ \alpha_i}  (X_i)   \ge  \VaRl_{ \gamma }  \left((y_K +y_J )\id_B + X \id_{B^c}    \right)
 =   \VaRl_{ \gamma} (X) .$$
This proves $ V_\beta(X)\ge \VaRl_{ \gamma}(X)$.

Next, we show
$ V_\beta(X)\le \VaRl_{ \gamma} (X)$ by an explicit construction of an optimal allocation.
Let $y=\VaRl_{ \beta}(X)$ and $z=\VaRl_{ \gamma}(X)$.
Without loss of generality, assume $1\in J$.
Recall that $$\p(A_{1-\gamma}^X\setminus A_{1-\beta}^X)  = \gamma-\beta =\sum_{i\in J}(\alpha_i-\beta),$$
and hence we can find
  a partition    $(A_i)_{i\in J}$  of $A_{1-\gamma}^X\setminus A_{1-\beta}^X$
such that $\p(A_i)=\alpha_i-\beta_i$ for each $i\in J$.
Define
\begin{align} \label{eq:opt}
X_i=
\left\{
\begin{array}{ll}
 (X-z) \left (\id_B   + \id_{A_1} +  \id_{(A_{1-\gamma}^X)^c}\right) + z & \mbox{if } i=1,\\
  y_+\id_B + (X-z) \id_{A_i} & \mbox{if }i\in J\setminus\{1\},\\
 0 &\mbox{if }  i\in K.
\end{array}
 \right.
 \end{align}
We easily verify that
$\sum_{i=1}^n X_i = (\#J-1) y_+\id_B + X\ge X$
and $X_i\uparrow_{\beta} X$, $i=1,\dots,n$.
Hence, $(X_1,\dots,X_n)$ is an admissible allocation for problem~\eqref{eq:rs4}.
Furthermore, we check that
$\VaRl_{ \alpha_1}(X)=z$
and
$\VaRl_{ \alpha_i}(X_i)=0$ for $i\ne 1$.
Therefore, $$ \sum_{i=1}^n \VaRl_{ \alpha_i}(X_i)=z =  \VaRl_{ \gamma} (X),$$ showing that $ V_\beta(X)\le  \VaRl_{ \gamma} (X)$.

With this, we finish the proof of Theorem~\ref{th:rs}.
\end{proof}

An explicit construction of an optimal allocation to problem~\eqref{eq:rs4} has been obtained in the proof of Theorem~\ref{th:rs}.
Specifically, and without loss of generality, let  $\alpha_1=   \bigvee_{i=1}^n \alpha_i$.
If $\beta<\bigvee_{i=1}^n \alpha_i$, then an optimal allocation is given by equation~\eqref{eq:opt}.
On the other hand, if $\beta \ge \bigvee_{i=1}^n \alpha_i$,
then an optimal allocation is trivially given by $X_1=X$ and $X_i=0$ for $i\ne 1$.
The optimal allocations are generally not unique, similarly to the case of problems~\eqref{eq:rs2} and~\eqref{eq:rs3} in \cite{ELW18}.

Finally, we discuss the implication of the values of the parameter $\beta $ in problem \eqref{eq:rs4}.
Recall that $V_\beta(X)$ represents the smallest total risk measure after risk redistribution.
In Theorem \ref{th:rs}, $\gamma=\gamma(\beta)$ is a piece-wise linear decreasing function of $\beta$,
with  $\gamma(0) = \sum_{i=1}^n \alpha_i$
and   $\gamma(\beta)= \bigvee_{i=1}^n  \alpha_i$  if $\beta\ge \bigvee_{i=1}^n  \alpha_i$.
Thus, if there is no dependence constraint, we arrive at \eqref{eq:elw1}, the minimum possible total risk measure obtained by \citet[Theorem 2]{ELW18}.
If the dependence constraint is strong enough (i.e., $\beta\ge \bigvee_{i=1}^n  \alpha_i$), then
we arrive at   the same value of the minimum total risk measure to \eqref{eq:elw2}, obtained by \citet[Proposition 5]{ELW18}.
If the dependence constraint is intermediate, then the total risk measure $V_\beta(X)$ varies between the two values, decreasing in $\beta$.
This suggests that the use of weak comonotonicity as a dependence constraint yields a spectrum of flexible formulations of the risk sharing problem.

\section{Summary and concluding notes}
\label{sect-6}

In this paper, we   introduced the notion of weak comonotonicity.
Via the analysis of several properties and applications, we   show the encompassing nature of weak comonotonicity, which contains -- as a special case -- the classical notion of comonotonicity. The new notion serves a bridge that connects the classical notion of comonotonicity of random variables with a number of well-known notions of (in)dependence and association \cite[e.g.,][and references therein]{J14, DS15}.  More importantly, we illustrate that introduced weak comonotonicity provides necessary and sufficient conditions for a number of problems in economics, banking, and insurance, and in particular to those dealing with risk aggregation and risk sharing. Specifically,  it is shown that the notion of weak comonotonicity yields a sufficient condition for the maximum $\VaR$ aggregation, and a necessary and sufficient condition for the maximum $\ES$ aggregation. As far as we are aware of, such conditions have been elusive. In addition,  we    provided analytical solutions to a risk sharing problem whose constraint on the dependence structure of admissible allocations has been most naturally described by weak comonotonicity, bridging the gap between strong comonotonicity and no dependence assumption studied in the literature.
We finally remark that, as weak comonotonicity depends on the set $\mathcal P$ of product measures,  its spectrum  is very wide,  including many types of dependence.

\section*{Acknowledgements}

The authors thank the Editor  Emanuele Borgonovo  and four anonymous referees for various helpful comments on an early version of the paper.
The authors have been supported by their individual research grants from the Natural Sciences and Engineering Research Council (NSERC) of Canada (RGPIN-2018-03823, RGPAS-2018-522590, RGPIN-2016-427216), as well as by the National Research Organization ``Mathematics of Information Technology and Complex Systems'' (MITACS) of Canada.

\appendix

\section{Appendix: Proof of Lemma~\ref{lem:rs1}}
\label{app:lemrs}

\paragraph{\it Proof of statement (i).}
By definition of $Y\uparrow_\beta X$, for a.s.~all $\omega\in B$ and $\omega'\in B^c$, we have $Y(\omega)\ge Y(\omega')$.
Therefore, there exists a constant $z\in \R$ such that
$B\subset \{Y \ge z\}$ and $B^c \subset  \{Y \le z\}$ a.s.
It is easy to see that this constant can be chosen as $z= \VaRl_{ \beta}(Y)$ because $\p(Y\ge  \VaRl_{ \beta}(Y) )\ge 1-\beta$ and $\p(Y\le  \VaRl_{ \beta}(Y))\ge \beta$.

\paragraph{\it Proof of statement (ii).}

By bound~\eqref{eq:elw3}, we have
\begin{align*}
\VaRl_{\alpha-\beta} (z \id _B + Y\id_{B^c}) + \VaRl_{\beta} ((Y-z) \id _B) \ge \VaRl_\alpha(Y).
\end{align*}
Note that $\VaRl_{\beta} ((Y-z) \id _B) = 0$ since $Y\ge z$ on $B$ by statement~(i) and $\p(B)=\beta$.
Hence,
\begin{align} \label{eq:pflm-1}
\VaRl_{\alpha-\beta} (z \id _B + Y\id_{B^c})
\ge  \VaRl_\alpha(Y).
\end{align}
To show the other direction, we consider two cases.
If $\VaRl_\alpha(Y)< \VaRl_\beta(Y)$, then $\{Y  \le  \VaRl_\alpha(Y)\}\subset \{Y  < \VaRl_\beta(Y) \}\subset B^c $ by statement~(i).
In this case,
\begin{align*}
\p(z\id_B + Y\id_{B^c} \le\VaRl_\alpha(Y)  ) & = \p(z\le \VaRl_\alpha(Y), ~B) + \p( Y  \le \VaRl_\alpha(Y) ,~ B^c)
\\&=  \p(B) + \p(Y\le \VaRl_\alpha(Y))
\\ &\ge \beta + 1-\alpha.
\end{align*}
If $\VaRl_\alpha(Y)= \VaRl_\beta(Y)$, then $B^c \subset \{Y  \le  \VaRl_\beta(Y)\} = \{Y  \le  \VaRl_\alpha(Y)\}$ by statement~(i).
In this case,
\begin{align*}
\p(z\id_B + Y\id_{B^c} \le\VaRl_\alpha(Y)  ) & = \p(z\le \VaRl_\alpha(Y), ~B) + \p( Y  \le \VaRl_\alpha(Y) ,~ B^c)
\\&=  \p(B) + \p(B^c)  =1.
\end{align*} In both cases,
$$\p(z\id_B + Y\id_{B^c} \le\VaRl_\alpha(Y)  )  \ge  1-(\alpha -  \beta),$$
 which implies $ \VaRl_{ \alpha-\beta} ( z\id_B + Y\id_{B^c})\le \VaRl_{ \alpha}(Y)$.
By bound~\eqref{eq:pflm-1}, we get $ \VaRl_{ \alpha-\beta} ( z\id_B + Y\id_{B^c})= \VaRl_{ \alpha}(Y)$.

\paragraph{\it Proof of statement (iii).}
Note that
\begin{align*}
\p(z\id_B + Y\id_{B^c} \ge \VaRl_\alpha(Y)  ) & = \p(z\ge  \VaRl_\alpha(Y), ~B) + \p( Y  \ge \VaRl_\alpha(Y) ,~ B^c)
\\&= \p(B) + \p( Y  \ge \VaRl_\alpha(Y) ,~ B^c)
\\ &\ge \p(Y  \ge \VaRl_\alpha(Y) ,~B) + \p( Y  \ge \VaRl_\alpha(Y) ,~ B^c)
\\&= \p(Y\ge \VaRl_\alpha(Y))
\ge  \alpha.
\end{align*}
This shows
\begin{align} \label{eq:pflm-2}\VaRl_{\alpha} (z \id _B + Y\id_{B^c})
\ge  \VaRl_\alpha(Y).\end{align}
For the other direction, we consider two cases, similarly to statement~(ii).
If $\VaRl_\alpha(Y)< \VaRl_\beta(Y)$, then $\{Y  \le  \VaRl_\alpha(Y)\}\subset \{Y  < \VaRl_\beta(Y) \}\subset B^c $ by statement~(i).
In this case,
\begin{align*}
\p(z\id_B + Y\id_{B^c} \le\VaRl_\alpha(Y)  ) & = \p(z\le \VaRl_\alpha(Y), ~B) + \p( Y  \le \VaRl_\alpha(Y) ,~ B^c)
\\&\ge   \p(Y\le \VaRl_\alpha(Y))
\ge   1-\alpha.
\end{align*}
If $\VaRl_\alpha(Y)= \VaRl_\beta(Y)$, then $B^c \subset \{Y  \le  \VaRl_\beta(Y)\} = \{Y  \le  \VaRl_\alpha(Y)\}$ by statement~(i).
In this case,
\begin{align*}
\p(z\id_B + Y\id_{B^c} \le\VaRl_\alpha(Y)  ) & = \p(z\le \VaRl_\alpha(Y), ~B) + \p( Y  \le \VaRl_\alpha(Y) ,~ B^c)
\\&\ge  \p(B^c)  =1-\beta \ge 1-\alpha.
\end{align*} In both cases,
$$\p(z\id_B + Y\id_{B^c} \le\VaRl_\alpha(Y)  )  \ge  1-\alpha$$
 which implies $ \VaRl_{ \alpha} ( z\id_B + Y\id_{B^c})\le \VaRl_{ \alpha}(Y)$.
By bound~\eqref{eq:pflm-2}, we get $ \VaRl_{ \alpha} ( z\id_B + Y\id_{B^c})= \VaRl_{ \alpha}(Y)$.

\paragraph{\it Proof of statement (iv).}
If  $\alpha\le \beta$,
then $\VaRl_{ \alpha}(Y) = \VaRl_{ \alpha} ( Y\id_B + z\id_{B^c})$ for all $z \le y$. Note that
\begin{align*}
\p(Y\id_B + z\id_{B^c} \le \VaRl_\alpha(Y)  ) & = \p(Y\le   \VaRl_\alpha(Y), ~B) + \p( z  \le  \VaRl_\alpha(Y) ,~ B^c)
\\  & = \p(Y\le   \VaRl_\alpha(Y), ~B) + \p(   B^c)
\\ & \ge  \p(Y\le   \VaRl_\alpha(Y))  \ge 1-\alpha.
\end{align*}
This shows
\begin{align} \label{eq:pflm-3}\VaRl_{\alpha} (Y \id _B + z\id_{B^c})
\le  \VaRl_\alpha(Y).\end{align}
For the other direction, we again consider two cases.
If $\VaRl_\alpha(Y)> \VaRl_\beta(Y)$, then $\{Y  \ge  \VaRl_\alpha(Y)\}\subset \{Y  > \VaRl_\beta(Y) \}\subset B $ by statement~(i).
In this case,
\begin{align*}
\p(Y\id_B + z\id_{B^c} \ge \VaRl_\alpha(Y)   ) & = \p(Y \ge \VaRl_\alpha(Y), ~B) + \p( z \ge \VaRl_\alpha(Y) ,~ B^c)
\\&\ge   \p(Y\ge  \VaRl_\alpha(Y))
\ge   \alpha.
\end{align*}
If $\VaRl_\alpha(Y)= \VaRl_\beta(Y)$, then $B \subset \{Y  \ge  \VaRl_\beta(Y)\} = \{Y  \ge  \VaRl_\alpha(Y)\}$ by statement~(i).
In this case,
\begin{align*}
\p(Y\id_B + z\id_{B^c} \ge \VaRl_\alpha(Y)  ) & = \p(Y \ge \VaRl_\alpha(Y), ~B) + \p( z  \ge \VaRl_\alpha(Y) ,~ B^c)
\\&\ge  \p(B)  =\beta \ge \alpha.
\end{align*} In both cases,
$$\p(Y\id_B + z\id_{B^c} \ge\VaRl_\alpha(Y)  )  \ge  \alpha$$
 which implies $ \VaRl_{ \alpha} ( Y\id_B + z\id_{B^c})\ge  \VaRl_{ \alpha}(Y)$.
By bound~\eqref{eq:pflm-3}, we get $ \VaRl_{ \alpha} ( Y\id_B + z\id_{B^c})= \VaRl_{ \alpha}(Y)$.

\paragraph{\it Proof of statement (v).}
Using~\eqref{eq:elw3}, we have
 $\VaRl_{ \alpha}(Z) +\VaRl_\beta((Z-z)\id_B) \ge \VaRl_{ \alpha+\beta} (z\id_B+ Z\id_{B^c}) $.
Note that $\VaRl_\beta((Z-z)\id_B)\le 0$ since  $\p((Z-z)\id_B\ge 0)\le \p(B) =1-\beta$.
 Hence $\VaRl_{ \alpha}(Z) \ge \VaRl_{ \alpha+\beta} (z\id_B+ Z\id_{B^c}) $.  This establishes statement~(v) and concludes the entire proof of Lemma~\ref{lem:rs1} \qed

\end{document}